\begin{document}

\title{Too Many or Too Few? Sampling Bounds for Topological Descriptors}

\author{
  Brittany Terese Fasy\thanks{Montana State University, Bozeman, MT, USA} \and
  Maksym Makarchuk\footnotemark[1] \and
  Samuel Micka\thanks{Western Colorado University, Gunnison, CO, USA} \and
  David L.~Millman\thanks{Blocky, Bozeman, MT, USA}
}

\maketitle

\section{Abstract}\label{sec:abstract}
Topological descriptors, such as the Euler characteristic function 
and the persistence diagram, have grown increasingly popular for
representing complex data. Recent work showed that a carefully
chosen set of these descriptors encodes all of the geometric and
topological information about a shape in~$\R^d$.
In practice, $\epsilon$-nets are often used to find samples in one of two extremes. On
one hand, making strong geometric assumptions about the shape allows us
to choose~$\epsilon$ small enough
(corresponding to a
high-enough density sample) in order to guarantee a faithful representation,
resulting in over-sampling. On the other hand, if we choose a larger~$\epsilon$ in order to allow
faster computations, this leads to an incomplete description of the 
shape and a discretized transform that lacks
theoretical guarantees. 
In this work, we investigate how many directions are really needed to represent
geometric simplicial complexes, exploring both synthetic and real-world
datasets.  We provide constructive proofs that help establish size bounds and 
an experimental investigation giving insights into the
consequences of over- and under-sampling.

\section{Introduction}\label{sec:intro}
Topological data analysis (TDA) provides tools
for extracting complex features from complex data.
Representing shapes using sets of topological descriptors, such as the
persistence diagram and the Euler characteristic function,
has drawn substantial interest in recent years; see~\cite{turner2014persistent,
curry2018many, ghrist2018euler, belton2018learning, fasy2018euler,
crawford2020predicting,betthauser,oudot2018inverse}.
More generally, several projects have found success in using topological
descriptors as features for machine
learning algorithms and statistical frameworks;
see~\cite{crawford2020predicting, hofer2019learning, qaiser2019fast,
lawson2019persistent,krishnapriyan2020robust,carriere2019general,
rostami2019deep}.

Much of the success of applying techniques from TDA comes from
the properties of the transformations that take
data and transform it into a set of topological descriptors.
One fundamental question is \emph{do these topological transforms help 
accurately capture the data}?~\cite{oudot2018inverse} argue that, because TDA techniques
are theoretically grounded,
there is a strong 
relationship between the representation of a shape
and the shape itself.

In this paper, we study the topological descriptors known as
the persistent homology transform~(\pht) and the Euler characteristic
transform~(\ect), and finite representations of them. The \pht\ (or \ect) is a function
that maps a simplicial complex in $\R^{d+1}$ to the family of persistence diagrams~(Euler
characteristic functions, respectively) parameterized by the set of
directions~$\S^{d}$.
\cite{turner2014persistent} showed that this map is injective for simplicial
complexes in $\R^2$ and~$\R^3$.

We turn to finite representations or discretizations of the PHT by
parameterizing the family of descriptors by $\Delta \subseteq \S^{d}$ instead of
$\S^d$ itself.
Thus, another natural question is: \emph{how small can $\Delta$ be}?
That is, how many directional persistence diagrams or Euler characteristic functions
are necessary to fully represent a shape?
\cite{fasy2019reconstructing} established an upper bound on the size of the set,
which is exponential in the dimension of the simplicial complex,
with a slight improvement for graphs embedded in~$\R^d$.
In \secref{lower-bounds}, we investigate lower bounds for the size of the
descriptor set of several different~transforms.

Recent results show that there are finite, faithful representations of
topological transforms 
when assumptions are made on the curvature of the
underlying shape~\cite{curry2018many}, when using the verbose
descriptors of simplicial complexes
in~$\R^d$~\cite{belton2018learning,belton2019reconstructing,fasy2019reconstructing},
or when 
using grayscale images and cubical complexes~\cite{betthauser}.
However, the resulting sets are often too large to be used in
practice, as they are exponential in dimension (unless additional assumptions are made).
In \secref{experiments}, we investigate how large these sets become in
practice on synthetic and real-world data sets.

Alternatively, one could use a coarse discretization of the sphere to create
a finite representation.
For example, \cite{hofer2019learning} found success on shape recognition tasks
by selecting 16 directions from~$\S^1$.
For each direction $s$, they computed one persistence diagram of a
height or lower-star filtration in direction~$s$.
Unfortunately, as we explore in \secref{experiments},
the simple discretization schemes would not necessarily capture
all of the geometry of the simplicial complex.
In \secref{loss}, we provide bounds on the error of the discretization in terms
of the Hausdorff distance.

\section{Background and Preliminary Results}\label{sec:background}
We explore topological descriptors that are created from
lower-star filtrations of (finite) simplicial complexes. In this section, we give a brief
background. We defer to~\cite[Ch.~1]{munkres} for a rigorous
introduction to homology and to~\cite[Ch.~VII]{edelsbrunner2010computational}
for an introduction to filtrations
and the computation of persistent~homology.

\paragraph{The Data: Embedded Simplicial Complexes}
Throughout this paper, we assume~$d \in \N := \{0, 1, \ldots  \}$ and we consider
geometric simplicial
complexes in~$\R^{d+1}$ (i.e., simplicial complexes embedded in~$\R^{d+1}$). Letting~$\simComp$ be a simplicial complex, we
use $\simComp_k$ to
denote the set of~$k$-simplices in~$\simComp$.
We call elements of $\simComp_0$ vertices and
$\simComp_1$ edges.
We assume that the simplicial complexes are not
geometrically degenerate.
Because we refer to this assumption often, we state it formally:

\begin{assumption}[General Position]\label{ass:general}
    Let $V \subset \R^{d+1}$ be a finite set. We say that $V$ is in
    \emph{general position} if no three points are collinear and no two points
    share a coordinate value. We say a geometric simplicial complex is in
    general position if its vertices are in general position.
\end{assumption}

\paragraph{The Descriptors: Invariants of Filtrations}
Simplicial complexes themselves are difficult to compare~\cite{ricci2018same}.
Instead, researchers turn to using descriptors of the data for summarization and
comparison. We focus on descriptors that summarize changes in
homology of a filtered topological space. A \emph{filtration} of a geometric simplicial
complex~$\simComp$ in $\R^{d+1}$
is a sequence of subcomplexes starting at the empty set and ending
with~$\simComp$; in particular, given a direction~$\dir \in \S^d$ and threshold $t
\in [0,\infty]$, the lower-level set of $\simComp$ with respect to~$\dir$ at
threshold~$t$~is:
\begin{equation}
    \lls{\simComp}{\dir}{t} := \{ \sigma \in \simComp ~|~
            \max_{v \in \sigma} \dir \cdot v \leq t\}
\end{equation}
We often refer to the threshold~$t$ as the \emph{height} parameter.
Noticing that, for all $t \leq t'$, the lower-level sets nest $\lls{\simComp}{\dir}{t} \subseteq
\lls{\simComp}{\dir}{t'}$,
the \emph{lower-star filtration} of $K$ with respect to direction $s$
is the following parameterized family of simplicial complexes:
\begin{equation}
    \filt{K}{\dir} := \{ \lls{\simComp}{\dir}{t} \}_{t \in \R}.
\end{equation}
We say that two directional filtrations, $\filt{K}{\dir}$ and $\filt{K'}{\dir'}$,
are equivalent if $K = K'$ and $\dir$ and $\dir'$ see the vertices in the same
order (but perhaps at different heights);
we denote this as $\filt{K}{\dir} \cong \filt{K'}{\dir'}$.

For a lower-star filtration,
we observe that topological changes only happen at the heights
of vertices (i.e., when $t = s \cdot v$ for some $v \in \simComp_0$).
However, while every topological change happens at the height of a vertex,
not every vertex witnesses a topological change.
For example, consider the simplicial complex~$K$ shown in
\figref{construction-triangle}, and let~$\dir = e^{i\pi/2}$ (the
$y$-direction). The vertices are seen in the following order:~$v_1$, then $v_2$,
then~$v_3$.
For~$\epsilon$ small enough,~$\lls{\simComp}{\dir}{\dir\cdot v_2 + \epsilon} =
\{v_1, v_2, [v_1, v_2]\}$
deformation retracts onto~$\lls{\simComp}{\dir}{\dir\cdot v_2 - \epsilon} = \{ v_1
\}$, which means that the map~$\lls{\simComp}{\dir}{\dir\cdot v_2 - \epsilon}
\hookrightarrow \lls{\simComp}{\dir}{\dir\cdot v_2 + \epsilon}$ induces an isomorphism
on homology (i.e., there is no topological change at the height of~$v_2$).

\begin{figure}[htb]
    \centering
    \includegraphics[height=0.5in]{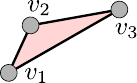}
    \caption{Example of a simplicial complex with three vertices, three edges,
        and one triangle. In the $y$-direction, the corresponding filtration
        sees three distinct simplicial complexes:~$\{v_1\} \subset  \{v_1,v_2,
        [v_1,v_2]\} \subset
        \{v_1,v_2,v_3,[v_1,v_2],[v_1,v_3],[v_2,v_3],[v_1,v_2,v_3]\}$. However,
        the lower-star filtration in the $y$-direction only sees one topological
        change, when a
        a new connected component is introduce at the height of $v_1$.
        }
    \label{fig:construction-triangle}
\end{figure}

We define a \myemph{topological descriptor} of a filtration to be
any topological invariant of the
filtration.
Examples of topological descriptor types include persistence
diagrams~\cite[Ch.~7]{edelsbrunner2000topological},
persistence landscapes~\cite{bubenik2015statistical},
Betti functions~\cite{van2011alpha}, and
Euler characteristic functions~\cite{torres,vogeley1994topological}.
The \emph{persistence diagram} is a graded multiset of
points in the extended plane, where each point is graded, or labeled, with a dimension.
The points correspond to homological features (colloquially described as
connected components, tunnels, and voids in low dimensions) that are present
for an interval of heights. The endpoints of that interval are the coordinates
of the point in the persistence diagram.
The persistence diagram is a complete, discrete invariant of a filtration, up to
homology. In particular, from the persistence diagram
alone, at
each height~$t$, we know the homology of $\lls{\simComp}{\dir}{t}$.
Naturally, functional summaries of persistence
diagrams are also topological descriptors, and provide access to the tools of
functional data analysis~\cite{crawford2020predicting,berry2018functional}.
For example, the persistence landscape (defined
in~\cite{bubenik2015statistical}) is a function $L \colon \R \times \Z \to
\R$ and is in one-to-one correspondence with persistence diagrams, but has the
advantage that means are well-defined, and, in fact, can be taken pointwise
(however, those means might not map back
to a persistence diagram).
For each $i \in \N$, the \emph{Betti function}~$\beta_i \colon \R \to \R$ assigns, for each~$t \in
\R$, the rank of the~$i$-th homology group,~$H_i(\lls{\simComp}{\dir}{t})$.
The \emph{Euler characteristic function}~$\chi \colon \R \to \R$ maps each
height to the Euler characteristic of the corresponding lower-level
set:~$\chi(t) = \sum_{i=0} (-1)^{i}\beta_i(t)$.
The Euler characteristic function and a variant of it called the smooth Euler
characteristic function are used by diverse groups of researchers today;
see~\cite{crawford2020predicting,bobrowski2017vanishing}.

Topological descriptors are invariants of the \emph{filtration}, not of the
\emph{underlying shape} itself.  For example, chirality is not captured by a single
descriptor.  However, given a descriptor in all directions,
the infinite set of descriptors~(labeled with their directions) contains more
information.  A \myemph{directional topological transform}
of type $\type$ (simply, a $\type$-transform in this paper) of a simplicial
complex $\simComp$ is the following
family of topological descriptors:
    \begin{equation}
        \fulltrans^{\type}(\simComp) := \{ (\dir,\type_K(\dir)) ~|~
        \dir \in \S^{d} \},
    \end{equation}
where $\type_K(\dir)$ is the topological descriptor of $\simComp$ of type $\type$ in direction
$\dir$.

For many common topological descriptor types,
the directional topological transform is a complete
invariant of the shape.
We call this set of descriptor types the \myemph{fundamental
descriptor types}; in particular, we note that
the
persistence diagram and the Euler characteristic function are fundamental
descriptor types.

While directional topological transforms use an infinite set
of descriptors, a finite set of well-chosen directions often suffices to represent the
transform.

\begin{definition}[Discrete Topological Transform]\label{def:discrete}
    Let $\type$ be a topological descriptor type and let $\Delta \subset \S^{d}$
    be a discrete set. We call $\Delta$ a set of directions.
    Then, we define~$\trans{\Delta}^{\type}$ as the function
    from the space of geometric simplicial complexes in $\R^{d+1}$
    to a space of topological descriptors of type $\type$ parameterized by~$\Delta$:
    \begin{equation}
        \trans{\Delta}^{\type}(\simComp) := \{ (\dir,\type_K(\dir)) ~|~
        \dir \in \Delta \}.
    \end{equation}
    Equivalently, $\trans{\Delta}^{\type}(K)$ is a set of direction /
    topological descriptor pairs.
    We call $\trans{\Delta}^{\type}(\simComp)$ \emph{faithful} if it contains enough
    information to recover $K$.
\end{definition}

In order to quantify how much of a simplicial complex is represented by a set of
topological descriptors,
we define \emph{observable} and~\emph{$\theta$-observable},
generalizing~\cite[Defs.~7.2 and~7.3]{curry2018many}.\footnote{We generalized
Euler observable from \cite{curry2018many} to simply observable by any topological
descriptor.
Letting $\delta=\sqrt{2-2\cos{\frac{\theta}{2}}}$, what we define as~$\theta$-observable
implies `at least'~$\delta$-observable, as defined in~\cite{curry2018many}.}

\begin{definition}[Observable]
\label{def:eul-obs}
    Let~$\simComp$ be a geometric simplicial complex in~$\R^{d+1}$. Let~$v \in
    \simComp_0$ and $\dir \in \sph^d$.
    Then, we say that $v$ is \emph{observable} from~$\dir$
    if there is a topological change at at $v$ in direction $\dir$.
\end{definition}

\begin{definition}[$\theta$-Observable]
\label{def:theta-obs}
    Let~$\simComp$ be a geometric simplicial complex in~$\R^{d+1}$ and $v \in
    \simComp_0$.
    We say that $v$ is \emph{$\theta$-observable} if there exists a direction $\dir_0 \in
    \S^d$ such that $v$ is observable from~$\dir_0$ and for all $\dir$ such
    that $\cos^{-1}(\dir_0 \cdot \dir) \leq \theta$ (i.e., the angle
    between $\dir_0$ and
    $\dir$ is at most~$\theta$).
\end{definition}

For a vertex $v \in K_0$, the largest value of $\theta$ such that $v$ is
$\theta$-observable relates the ``flatness'' of the
simplicial complex at $v$.
To represent a simplicial complex with a set of topological descriptors,
a necessary condition is
that each vertex is observed.  Hence, we~define:

\begin{definition}[Observing Region]\label{def:obs-stratum}
    Let~$\simComp$ be a geometric simplicial complex in~$\R^{d+1}$ and $v \in
    \simComp_0$.
    The \emph{observing region} of~$v$~is:
    \begin{equation}
        \obs{v}{\simComp}: = \{ \dir \in \sph^{d} ~|~
        \text{ $v$ is observable from $\dir$} \}.
    \end{equation}
\end{definition}

The \emph{size} of the observing region is its Lebesgue
measure~\cite{lebesgue-phd}.
Furthermore, the observing regions are related to a stratification of $\sph^{d}$.

\begin{definition}[Coarse Stratification]\label{def:orth-strat}
    Let~$\simComp$ be a geometric simplicial complex in~$\R^{d+1}$.
    The \emph{coarse stratification} of $\S^d$ is the partition
    induced by the equivalence relation $\simeq$, where, for all $\dir_1,\dir_2 \in
    \S^d$ in the same strata,
    \begin{equation}
       \dir_1 \simeq \dir_2 \iff \ord{\simComp_0}{\dir_1}=\ord{\simComp_0}{\dir_2},
    \end{equation}
    where $\ord{\simComp}{\dir}$ denotes the partial order of vertices in $K_0$
    with respect to direction $\dir$.
\end{definition}

By this definition, knowing the vertex
set is sufficient for knowing the coarse stratification.
Furthermore, we note that this definition uses only a partial order for an
arbitrary direction (as opposed to a total order),
as there are directions where two or more vertices are at the same height.
The cells
whose partial order is a total order are the $d$-dimensional~cells.

\begin{corollary}[Relation Between $K$ and $\theta$]\label{cor:relation}
    Let~$\simComp$ be a geometric simplicial complex in $\R^{d+1}$.
    The diameter of the smallest~$d$-dimensional stratum in the coarse stratification of
    $\simComp$ is equal to the smallest angle between vectors with endpoints in the vertex set.
    Furthermore, if $\theta := \min\{ \angle u,v,w ~|~ uvw \in \simComp_0\} $,
    then all vertices in $K_0$ are~$\theta$-observable.
\end{corollary}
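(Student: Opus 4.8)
The statement has two parts, and I would address them in sequence, starting with the claim that the diameter of the smallest $d$-dimensional stratum equals the smallest angle between vectors with endpoints in the vertex set. The plan is to characterize the walls between adjacent $d$-dimensional strata. By \defref{orth-strat}, two directions lie in the same stratum exactly when they induce the same partial order on $K_0$; the $d$-dimensional strata are those on which the order is total. The boundary between two such strata is crossed precisely when two vertices $u, v$ swap heights, i.e. when $\dir \cdot u = \dir \cdot v$, equivalently when $\dir \perp (u - v)$. So the walls of the coarse stratification are exactly the great subspheres $\{\dir \in \S^d : \dir \cdot (u-v) = 0\}$ for pairs $u \neq v$ in $K_0$. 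A $d$-dimensional stratum is a connected component of the complement of this arrangement of great subspheres. First I would argue that the diameter of the smallest such component (measured in the geodesic/angular metric on $\S^d$) is governed by the closest pair of these hyperplanes: the width of a cell in the arrangement is bounded by — and, for the narrowest cell, equal to — the dihedral angle between two of the bounding great subspheres. Then I would translate the angle between the great subspheres $(u-v)^\perp$ and $(u'-v')^\perp$ into the angle between their normals $u - v$ and $u' - v'$, which are exactly ``vectors with endpoints in the vertex set.'' Taking the minimum over all such pairs yields the stated equality.

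For the second part, I would invoke \defref{theta-obs} directly. Fix a vertex $v \in K_0$ and let $\dir_0$ be a direction from which $v$ is observable (such a direction exists for any vertex: taking $\dir_0$ so that $v$ is the globally highest vertex makes $v$ the last vertex seen, witnessing a topological change). I need to exhibit an angular ball of radius $\theta$ around $\dir_0$ on which $v$ remains observable, where $\theta = \min\{\angle u,v,w : uvw \in K_0\}$ — here $\angle u, v, w$ denotes the angle at $v$ in the triangle $uvw$, i.e. the angle between $u - v$ and $w - v$. The key observation is that $v$'s observability status can only change when $\dir$ crosses a wall $(v - w)^\perp$ for some other vertex $w$ — when $v$ passes another vertex in the ordering — because whether $v$ witnesses a homology change depends only on the relative order of $v$ and the vertices adjacent to it in $K$. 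So I would show that the angular distance from $\dir_0$ to the nearest such wall $(v-w)^\perp$ is at least $\theta$; this reduces to a statement about the angle between $\dir_0$ and the normal directions $v - w$, controlled by the angles at $v$ subtended by pairs of other vertices.

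The main obstacle I anticipate is the second part: pinning down exactly why ``$v$ remains observable as long as its order relative to its link is unchanged,'' and then getting the clean bound of $\theta$ (rather than $\theta/2$ or some constant multiple) out of the geometry. The subtlety is that observability of $v$ is a statement about whether the inclusion $\lls{K}{\dir}{s\cdot v - \epsilon} \hookrightarrow \lls{K}{\dir}{s\cdot v + \epsilon}$ is a homology isomorphism, and I must argue this is determined purely by which neighbors of $v$ in $K$ precede $v$ — a combinatorial fact about lower-star filtrations — so that it is locally constant on strata and the witness $\dir_0$ persists across any direction whose stratum shares the ``$v$ vs.\ its neighbors'' part of the order. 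Choosing $\dir_0$ carefully (e.g. as the direction realizing the widest observing cone for $v$, or centering it appropriately relative to $v$'s neighbors) should make the $\theta$ bound tight; showing the angle from $\dir_0$ to the nearest relevant wall is at least the minimum angle at $v$ is then elementary spherical trigonometry. The first part, by contrast, I expect to be routine once the wall arrangement is set up, modulo care with the geodesic metric on $\S^d$ versus the chord metric.
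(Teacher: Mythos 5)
The paper states this corollary without a proof of its own, so there is nothing to compare line-by-line; judging your sketch on its merits, the scaffolding (walls of the coarse stratification are the great subspheres $\{\dir \in \S^d : \dir\cdot(u-v)=0\}$, and observability of $v$ can only change where $\dir$ crosses a wall involving $v$) is reasonable, but there are genuine gaps in both halves. The most concrete one is your witness direction: choosing $\dir_0$ so that $v$ is the \emph{globally highest} vertex does not make $v$ observable. In that case the last step of the filtration adds the entire closed star of $v$, a cone over the link of $v$, which need not change homology; the paper's own triangle example is a counterexample, since the top vertex $v_3$ in the $y$-direction witnesses no topological change. The easy witness is the opposite choice: a direction in which $v$ is a local (e.g.\ global) \emph{minimum}, where the lower star is $\{v\}$ alone and $\beta_0$ jumps. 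Even this needs care: a $2$-complex vertex whose neighbors surround it (an interior vertex of a triangulated disk) has no local-min direction, and generically its lower link is a single arc in every direction, so it is observable from no direction at all; an honest proof must either restrict the setting or address such vertices.

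The quantitative step is also aimed at the wrong quantity. You propose to show the angular distance from $\dir_0$ to the nearest wall $(v-w)^\perp$ is at least $\theta$, ``controlled by the angles at $v$.'' But the cone of local-min directions, $\{\dir : \dir\cdot(w-v)>0 \text{ for every neighbor } w \text{ of } v\}$, has angular width $\pi-\alpha$ in the planar case, where $\alpha$ is the angular spread of $v$'s neighbors as seen from $v$ --- it \emph{shrinks} as the angle at $v$ grows. The inequality you actually need, $2\theta \le \pi-\alpha$, comes not from the angle at $v$ but from the angle sum of the triangle $uvw$ (with $u,w$ the extreme neighbors): the angles at $u$ and $w$ sum to $\pi-\alpha$, so their minimum, and hence $\theta$, is at most $(\pi-\alpha)/2$; this is exactly why the corollary minimizes over all triples rather than only over angles at $v$. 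Centering $\dir_0$ in that lune then yields the $\theta$-ball, with a boundary ($\le$ versus $<$) subtlety when the triangle is isosceles, and you still must prove the combinatorial lemma you only assert, namely that observability of $v$ depends solely on which neighbors of $v$ precede it (a lower-star/lower-link computation). Finally, in the first part you conflate the width of a cell with its diameter and the minimum-width cell with the ``smallest'' one; on $\S^1$ these coincide (cells are arcs and the gap between consecutive walls equals the angle between the corresponding difference vectors), but for $d\ge 2$ a cell of small measure can be long and thin, so your argument as sketched establishes the first claim only for $d=1$.
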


In~\cite[Theorem 7.14]{curry2018many}, a representative persistence diagram (or Euler
characteristic function)
from each $d$-cell of the coarse
stratification, along with the coarse stratification itself, are used as a finite
representation of the PHT or ECT.
In fact, this sufficiency result holds for
all fundamental descriptor types with the same proof as in
\cite[Theorem~7.14]{curry2018many}, which allows us to use stratifications in order to
compare sets of simplicial complexes:

\begin{lemma}[Sufficient Faithful Set]\label{lem:suff-faithful}
    Let $K$ be a geometric simplicial complex in $\R^{d+1}$, $\type$ a
    fundamental descriptor type, and~$P\subseteq \S^{d}$ be a set that hits all the
    highest-dimensional strata of the coarse stratification.
    Then, $\trans{\Delta}^{\type}(\simComp)$ along
    with the
    coarse stratification is faithful.
\end{lemma}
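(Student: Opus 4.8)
The plan is to establish faithfulness in the sense of Definition~\ref{def:discrete} by showing that the pair consisting of $\trans{\Delta}^{\type}(K)$ (where we set $\Delta = P$) and the coarse stratification of $\S^d$ determines $K$ itself; once $K$ has been reconstructed, the whole transform $\fulltrans^{\type}(K)$ is recovered by recomputation, so nothing further is needed. I would follow the reconstruction of \cite[Theorem~7.14]{curry2018many} verbatim, substituting the type-$\type$ descriptor $\type_K(\dir)$ for the Euler characteristic function wherever the latter appears. The only properties of the descriptor used are that, on each top-dimensional stratum of the coarse stratification, $\type_K(\dir)$ depends only on the combinatorial lower-star filtration (which is constant on that stratum), and that a topological change occurring at a vertex $v$ observable from $\dir$ registers in $\type_K(\dir)$ at the height $\dir\cdot v$. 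Both hold for every fundamental descriptor type --- this is precisely the content being claimed.

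First I would record what the coarse stratification contributes on its own. By Definition~\ref{def:orth-strat}, its walls are exactly the great subspheres $\{\dir\in\S^d : \dir\cdot(v_i-v_j)=0\}$ ranging over pairs of distinct vertices of $K$, so the stratification is equivalent data to the arrangement of these subspheres: it records the unit normal of each wall, i.e.\ the direction of every pairwise difference $v_i-v_j$, and, for each top-dimensional stratum $\tau$, the total order that $\tau$ induces on $K_0$. What is \emph{not} yet pinned down is the overall metric scale --- equivalently, one nonzero linear functional applied to each $v_i - v_j$ --- and the set of simplices of $K$ of dimension at least one.

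Next I would recover the vertex set. Fix a top-dimensional stratum $\tau$ and its representative $\dir = p_\tau \in P\cap\tau$, which exists by hypothesis. On $\tau$ the combinatorial lower-star filtration is constant, so whether a given vertex is observable from a direction of $\tau$ is constant over $\tau$, and the heights at which $\type_K(p_\tau)$ registers a change are exactly the values $p_\tau\cdot v$ over vertices $v$ observable from $p_\tau$, matched to their vertices via the total order carried by $\tau$. By Corollary~\ref{cor:relation} every vertex is $\theta$-observable and hence observable from a whole top-dimensional stratum, so --- since $P$ meets every such stratum --- every vertex is observable from some representative in $P$; the wall-crossing comparisons between representatives of adjacent strata, as in \cite[Theorem~7.14]{curry2018many}, then transport height information across strata that do not themselves witness a change at a given vertex. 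Combining the resulting linear measurements $p\cdot v$ with the difference directions read off from the stratification determines each vector $v_i - v_j$, and together with the absolute heights $p\cdot v$ recorded in the descriptors this pins down $K_0$.

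Finally, with $K_0$ known, I would recover the simplices by induction on dimension: assuming the $(k-1)$-skeleton of $K$ is already determined, a candidate $k$-simplex $\sigma$ belongs to $K$ if and only if, for a suitable representative $p\in P$, the descriptor $\type_K(p)$ exhibits at the height of the highest vertex of $\sigma$ with respect to $p$ precisely the change forced by the presence of $\sigma$, given the already-reconstructed subcomplex lying below that height --- a change in the homology of the sublevel set for persistence diagrams, a jump of the corresponding signed size for Euler characteristic functions, and in either case decidable from $\type_K(p)$. I expect the main obstacle to be exactly this bookkeeping: attributing each recorded change to the correct vertex or simplex when several simplices enter at one height and when no single representative witnesses every relevant change, and verifying that the combinatorial data in the coarse stratification always suffices to disambiguate. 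This is the technical core of \cite[Theorem~7.14]{curry2018many}; the point of the present lemma is that that argument never uses anything about the Euler characteristic function beyond the two stratum-wise properties noted in the first paragraph, which every fundamental descriptor type shares.
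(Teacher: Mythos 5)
Your proposal takes essentially the same route as the paper: the paper gives no independent argument, instead observing that the proof of \cite[Theorem~7.14]{curry2018many} goes through verbatim once the Euler characteristic function is replaced by any fundamental descriptor type, which is exactly the reduction you make (your sketch of the vertex- and simplex-recovery steps is a reasonable unpacking of that cited argument, not a departure from it). So the proposal is correct and matches the paper's approach.
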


Leveraging this lemma, we know that in order to distinguish different shapes,
it would suffice to select
a descriptor in each strata.

\begin{theorem}[Sufficient Comparison Set]\label{thm:sufficientcmpset}
    Let $\type$ be a fundamental descriptor type,
    with $\delta \colon \type \times \type \to \R$ a distance
    metric.\footnote{Here, we slightly abuse notation and let $\type$ denote
    both the descriptor type and the set of all descriptors of that type.}
    Let~$\gscSpace$ be a set of geometric simplicial complexes in~$\R^{d+1}$.
    Let~$P \subset \S^d$ be a finite set that hits every d-cell of the coarse
    stratifications for each $K \in \gscSpace$.
    Then, the following function is a distance
    metric:
    \begin{align}
        d_{\gscSpace} \colon \gscSpace \times \gscSpace &\to \R \\
         (\simComp,\simComp') &\mapsto \sum_{\dir \in P}
         \delta\left(
                    \type_{\simComp}(\dir),
                    \type_{\simComp'}(\dir)
                \right).
    \end{align}
\end{theorem}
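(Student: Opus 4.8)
The plan is to verify the three metric axioms for $d_{\gscSpace}$ directly, leveraging that $\delta$ is already a metric on $\type$ and that $P$ hits every $d$-cell of the coarse stratification of each $K \in \gscSpace$. Non-negativity is immediate: each summand $\delta(\type_{\simComp}(\dir), \type_{\simComp'}(\dir))$ is non-negative, so the finite sum is too. Symmetry is likewise immediate from the symmetry of $\delta$: swapping $\simComp$ and $\simComp'$ permutes no terms and negates nothing. The triangle inequality follows termwise --- for each fixed $\dir \in P$, $\delta(\type_{\simComp}(\dir), \type_{\simComp''}(\dir)) \leq \delta(\type_{\simComp}(\dir), \type_{\simComp'}(\dir)) + \delta(\type_{\simComp'}(\dir), \type_{\simComp''}(\dir))$ --- and then we sum over the finitely many $\dir \in P$.

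The one axiom requiring real work is the identity of indiscernibles: $d_{\gscSpace}(\simComp, \simComp') = 0$ if and only if $\simComp = \simComp'$. The ``if'' direction is trivial (every summand vanishes when $\simComp = \simComp'$). For the ``only if'' direction, suppose $d_{\gscSpace}(\simComp, \simComp') = 0$. Since every summand is non-negative, each must vanish, so $\delta(\type_{\simComp}(\dir), \type_{\simComp'}(\dir)) = 0$ for every $\dir \in P$; because $\delta$ is a metric on $\type$, this gives $\type_{\simComp}(\dir) = \type_{\simComp'}(\dir)$ for all $\dir \in P$. Now I would invoke \lemref{suff-faithful}: since $P$ hits every $d$-cell of the coarse stratification of $\simComp$ (and of $\simComp'$), the data $\trans{P}^{\type}(\simComp)$ together with that stratification is faithful, hence determines $\simComp$ up to the ambient isometry class we are working in. The subtlety here is that the coarse stratification is itself determined by the vertex set, and a priori $\simComp$ and $\simComp'$ could have different stratifications. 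However, agreement of the fundamental descriptor $\type_{\simComp}(\dir) = \type_{\simComp'}(\dir)$ on a set $P$ that is simultaneously $d$-cell-hitting for both complexes means the faithful reconstruction procedure applied to either yields the same complex; thus $\simComp = \simComp'$.

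The main obstacle is precisely this last point: making rigorous that coincidence of descriptors on the common hitting set $P$ forces $\simComp = \simComp'$, rather than merely forcing the two reconstructions to be individually consistent. I expect to handle this by noting that \lemref{suff-faithful} gives an injective map from $\gscSpace$ (modulo the relevant equivalence) into the space of (stratification, descriptor-family-on-$P$) pairs, and $d_{\gscSpace}(\simComp,\simComp')=0$ forces the descriptor-families-on-$P$ to agree; combined with the fact that the descriptors on $P$ already encode the vertex heights in each direction of $P$ --- and hence constrain the vertex set enough to recover the coarse stratification --- injectivity yields $\simComp = \simComp'$. Everything else is routine bookkeeping over the finite sum.
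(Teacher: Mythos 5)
Your proposal takes essentially the same route as the paper: the routine axioms (non-negativity, symmetry, triangle inequality, vanishing on the diagonal) are inherited termwise from $\delta$, and the only substantive axiom is settled by invoking \lemref{suff-faithful} on the hitting set $P$ — the paper just phrases it contrapositively, arguing that $\simComp \neq \simComp'$ forces some $\dir \in P$ with inequivalent filtrations and hence a strictly positive summand. The subtlety you flag about the two complexes possibly having different coarse stratifications (and the stratification being side information in the faithfulness lemma) is glossed over in the paper's one-line invocation as well, so your write-up is, if anything, slightly more explicit on that point.
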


\begin{proof}
    Non-negativity, symmetry, the triangle inequality, and the fact
    that, for all~$\simComp \in \gscSpace$,~$d_{\gscSpace}(\simComp,\simComp)=0$
    follow directly from
    $\delta$ being a distance metric.

    To show the final property (positivity), let $\simComp \neq \simComp' \in
    \gscSpace$ and
    suppose, by contradiction, that~$d_{\gscSpace}(\simComp,\simComp')= 0$.
    Because $P$ hits all of the~$d$-cells of the coarse stratification for $K$,
    by \lemref{suff-faithful}, there must exist some direction~$s
    \in P$ such that~$\filt{\simComp}{s} \not \cong
    \filt{\simComp'}{s}$.  Because $\delta$ is a metric, we then know
    that~$\delta( \type_{\simComp}(\dir), \type_{\simComp'}(\dir) ) > 0$.
    Thus, by the definition of $d_{\gscSpace}$,
    we conclude~$d_{\gscSpace} (\simComp,\simComp') > 0$.
\end{proof}

One way to apply \thmref{sufficientcmpset} is to define a set of directions for
a fixed database of shapes.
Let
$\gscSpace = \{ \simComp_1, \simComp_2, \ldots, \simComp_n \}$ be a set
of geometric simplicial complexes.
For~$\simComp_i \in \gscSpace$, let $\epsilon_i$ be the size of the minimal
strata in the coarse stratification of $\simComp_i$.
Let $\epsilon = \min_i \epsilon_i$ and let $\Delta$ be $\epsilon$-net over $\S^d$,
First, observe that for each~$\simComp_i \in \gscSpace$ and~$\type$ a fundamental
descriptor type, the transform
$\trans{\Delta}^{\type}(\simComp_i)$ produces a set of descriptors that uniquely
represents the shape.
Second, observe that~$\Delta$ satisfies the assumptions on the direction set of
\thmref{sufficientcmpset}, allowing us to compare our complexes.
Thus, understanding the minimal stratum size is particularly~relevant.

\section{Lower Bounds on Representation}\label{sec:lower-bounds}
In this section, we provide lower bounds on the number of
topological descriptors necessary to guarantee that the observing region of each
vertex is hit by at least one descriptor.
This lower bound grows at least linearly with the size of the vertex set, as
the observing region for a degree-two vertex
grows arbitrarily small as it
approaches the affine space spanned by its two adjacent vertices.

\begin{lemma}[Observing Regions Witness Local Max and Min]\label{lem:miss-deg-two}
    Let~$\simComp$ be a geometric simplicial complex in~$\R^{d+1}$.
    Let $v\in K_0$
    be a degree-two vertex with adjacent vertices $u$ and~$w$. Define the
    following two sets:
    \begin{align*}
        P_{1} &:= \{\dir \in \sph^{d}\ |\ \dir \cdot u < \dir \cdot v \text{ and }
                        \dir \cdot w < \dir \cdot v\},\\
        P_{2}  &:= \{\dir \in \sph^{d}\ |\ \dir \cdot u > \dir \cdot v  \text{ and }
                        \dir \cdot w > \dir \cdot v\}.
    \end{align*}
    Then, $\obs{v}{\simComp} = P_1 \cup P_2$.
\end{lemma}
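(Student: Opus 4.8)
The plan is to fix a direction $\dir \in \sph^{d}$ and analyze the inclusion of consecutive lower-level sets across the height of $v$, using the fact --- illustrated by the vertex $v_2$ in the discussion of \figref{construction-triangle} --- that $v$ is observable from $\dir$ precisely when this inclusion fails to induce an isomorphism on homology. By Assumption~\ref{ass:general}, I may assume $\dir \cdot u$, $\dir \cdot v$, and $\dir \cdot w$ are pairwise distinct; the directions violating this are those orthogonal to $v - u$ or to $v - w$, a measure-zero great subsphere lying outside every $d$-cell of the coarse stratification, so they can be set aside. Write $t_0 := \dir \cdot v$ and, for small $\epsilon > 0$, set $A := \lls{\simComp}{\dir}{t_0 - \epsilon}$ and $B := \lls{\simComp}{\dir}{t_0 + \epsilon}$. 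Since $v$ has degree two, its star is $\{v, [u,v], [v,w]\}$, so $B \setminus A$ --- the simplices whose $\dir$-highest vertex is $v$ --- is a subset of $\{v, [u,v], [v,w]\}$ determined by which of $u$ and $w$ lie below $v$; the proof then reduces to checking the three resulting cases.

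To show $P_1 \cup P_2 \subseteq \obs{v}{\simComp}$: if $\dir \in P_1$, then $u, w \in A$ and $B \setminus A = \{v, [u,v], [v,w]\}$, so $B$ is obtained from $A$ by adjoining the isolated vertex $v$, then the edge $[u,v]$ (attaching $v$ to the component of $u$), then the edge $[v,w]$, which either merges two components --- lowering $\beta_0$ --- or closes a loop --- raising $\beta_1$. If $\dir \in P_2$, then for $\epsilon$ small both $[u,v]$ and $[v,w]$ remain absent (their $\dir$-highest vertices $u$ and $w$ exceed $t_0 + \epsilon$), so $B = A \sqcup \{v\}$ and $\beta_0$ increases by one. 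In either case the inclusion $A \hookrightarrow B$ is not a homology isomorphism, so $\dir \in \obs{v}{\simComp}$.

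For the reverse inclusion I would argue contrapositively. If $\dir \notin P_1 \cup P_2$, then by distinctness exactly one neighbor lies below $v$; say $\dir \cdot u < \dir \cdot v < \dir \cdot w$. Then $B \setminus A = \{v, [u,v]\}$, so $B$ is obtained from $A$ by adjoining a free edge with a free endpoint --- an elementary collapse, exactly as $\{v_1, v_2, [v_1,v_2]\}$ collapses onto $\{v_1\}$ in \figref{construction-triangle}. Hence $A$ is a deformation retract of $B$, the inclusion is a homology isomorphism, and there is no topological change at $v$, i.e., $\dir \notin \obs{v}{\simComp}$; combined with the previous paragraph this yields $\obs{v}{\simComp} = P_1 \cup P_2$. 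I expect the main obstacle to be exactly this ``regular vertex'' case: one must verify carefully that adjoining $v$ together with a single incident edge never alters homology (the collapse argument), and one must take care both with the degenerate directions orthogonal to $v - u$ or $v - w$ and with the implicit hypothesis that the link of $v$ is the two-point set $\{u, w\}$ --- were the $2$-simplex $[u,v,w]$ present, the link would be an arc and the $P_1$ analysis above would need revisiting.
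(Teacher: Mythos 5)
Your proof is correct and follows essentially the same route as the paper's: a trichotomy on whether $v$ is a local maximum (directions in $P_1$), a local minimum (directions in $P_2$), or regular with one neighbor below and one above, with observability holding in the first two cases and failing in the third. You simply fill in the homological details (the component/loop bookkeeping for $P_1 \cup P_2$ and the collapse argument in the regular case) that the paper's terse case analysis asserts without proof, and your caveats about level directions and about a possible filled triangle $[u,v,w]$ concern implicit assumptions that the paper's own proof also makes silently.
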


\begin{proof}
    Let $\dir \in \obs{v}{\simComp}$.
    Because~$v$ is degree two, when
    looking at $v$ in direction $\dir$, we encounter one of three cases:
    (1) $v$ is a local maximum in direction $\dir$. This happens if and only if~$\dir \in
    P_{1}$;
    (2) $v$ is seen as a local minimum in direction $\dir$.  This
    happens if and only if~$\dir \in
    P_{2}$; (3) one edge goes into~$v$ from below and the other edge starts at
    $v$ and goes up with respect to direction $s$. In this case,~$v$ is not observable from
    $\dir$.  Hence, we conclude~$\obs{v}{\simComp} = P_1 \cup P_2$.
\end{proof}

\begin{figure}[htb]
    \centering
    \includegraphics[width=\textwidth]{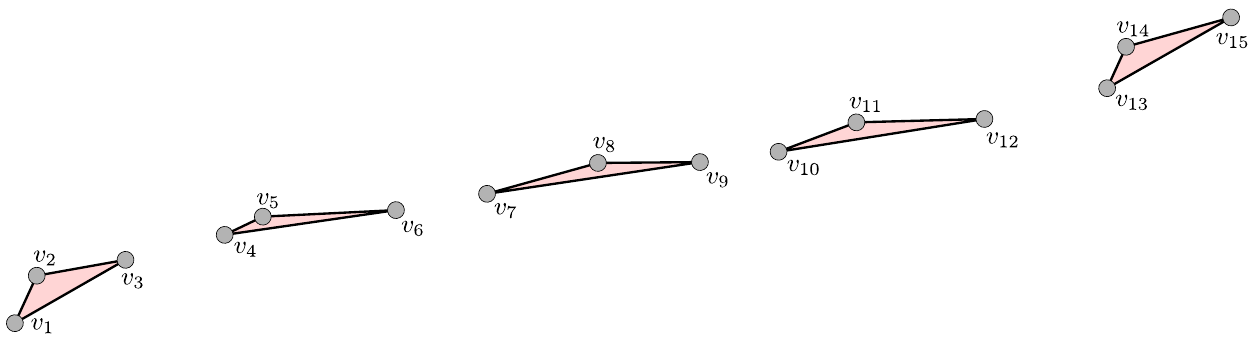}
    \caption{A simplicial complex with
        $15$ vertices.  For any fundamental descriptor type,
        at least five topological descriptors are needed to uniquely
        represent this simplicial complex; in general, there exist configurations
        of $n$ simplices that require $\Omega (n_0)$ topological descriptors for
        a faithful representation.
        }
    \label{fig:construction}
\end{figure}

Consider the simplicial complex in \figref{construction} comprised of five
triangles, along with their edges and vertices.  This complex has
the additional feature that no two edges are parallel.  As a result, we
imagine moving the vertices~$v_2,v_5,v_8,v_{11},$ and $v_{14}$ so close to the
complementary edge of their respective triangles
that the three edges defining the triangle are nearly
parallel.  We generalize this idea as follows:

\begin{theorem}[Linear Size Descriptor Set]\label{thm:lower:bc}
    There exist simplicial complexes that
    require at least a linear number of topological descriptors (with respect
    to the number of vertices) for unique~representation.
\end{theorem}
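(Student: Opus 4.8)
The plan is to turn the informal picture sketched right before the statement into an explicit construction and then apply \lemref{miss-deg-two} together with a counting argument. I would build, for each $n$, a geometric simplicial complex $\simComp^{(n)}$ in $\R^{d+1}$ (it suffices to do this in the plane and embed) consisting of $n$ disjoint ``flat'' triangles, i.e., for each $i \in \{1, \ldots, n\}$ a triangle with vertices $u_i, v_i, w_i$ where $v_i$ lies extremely close to the edge $[u_i, w_i]$, so that the affine span of $u_i$ and $w_i$ nearly passes through $v_i$. The vertex $v_i$ is degree two with adjacent vertices $u_i$ and $w_i$, so \lemref{miss-deg-two} applies directly: $\obs{v_i}{\simComp^{(n)}} = P_1^{(i)} \cup P_2^{(i)}$, the set of directions seeing $v_i$ as a strict local max or strict local min among $\{u_i, v_i, w_i\}$.

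Next I would make the key geometric estimate: as $v_i$ is pushed toward the segment $[u_i, w_i]$, the region $P_1^{(i)} \cup P_2^{(i)}$ shrinks, and its Lebesgue measure on $\S^d$ can be made smaller than any prescribed $\eta > 0$. Concretely, if $v_i$ is within distance $\delta_i$ of the line through $u_i$ and $w_i$, then a direction $\dir$ sees $v_i$ as a local extremum only if $\dir$ is within angle $O(\delta_i / \ell_i)$ of the normal directions to that line, where $\ell_i$ is the relevant edge length; so $\mathrm{meas}(\obs{v_i}{\simComp^{(n)}}) \to 0$ as $\delta_i \to 0$. I would then choose the $n$ triangles to have pairwise ``generic'' orientations (no two edges parallel, consistent with \assref{general}) and parameters $\delta_i$ small enough that the observing regions $\obs{v_1}{\simComp^{(n)}}, \ldots, \obs{v_n}{\simComp^{(n)}}$ are pairwise disjoint --- this is possible because each region is a small neighborhood of an isolated pair of antipodal normal directions, and distinct triangles have distinct normal directions, so once the regions are thin enough they cannot overlap.

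The conclusion then follows by a pigeonhole argument. A set of directions $\Delta$ yields a faithful representation only if every vertex is observable from some direction in $\Delta$ (this necessary condition is exactly the motivation stated before \defref{obs-stratum}); in particular, for each $i$ there must be some $\dir \in \Delta$ with $\dir \in \obs{v_i}{\simComp^{(n)}}$. Since the $n$ sets $\obs{v_i}{\simComp^{(n)}}$ are pairwise disjoint, no single direction can cover two of them, so $|\Delta| \geq n$. Because $\simComp^{(n)}$ has $n_0 = 3n$ vertices (plus $3n$ edges and $n$ triangles), this gives $|\Delta| = \Omega(n_0)$, establishing the theorem. This holds for any fundamental descriptor type, since faithfulness for any such type still requires observing every vertex.

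The main obstacle I anticipate is the disjointness step: ensuring that shrinking the observing regions to be pairwise disjoint can be done simultaneously for all $n$ triangles while keeping the whole complex in general position and embedded without self-intersection. I would handle this by first fixing $n$ generic edge directions (so the $n$ pairs of normal directions are distinct and well-separated on $\S^d$ by some fixed gap depending only on the chosen directions), then choosing each $\delta_i$ small relative to that gap so each observing region stays inside a tiny cap around its normal directions; finally, placing the $n$ triangles in far-apart regions of $\R^{d+1}$ guarantees the geometric complex is a disjoint union and hence embedded. One subtlety worth checking is that the observing region of a degree-two vertex genuinely has positive measure for finite $\delta_i$ (so the necessary condition is not vacuous) but can be driven below the separation threshold --- both follow from the explicit $O(\delta_i/\ell_i)$ angular bound above.
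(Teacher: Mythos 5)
Your proposal is correct and takes essentially the same route as the paper: the paper's construction is likewise a family of nearly-degenerate ("flat") triangles whose middle, degree-two vertices have small, pairwise disjoint observing regions (arranged along a diagonal with pairwise-distinct base slopes rather than placed far apart), and the lower bound follows from the same pigeonhole argument that every vertex must be observed. The only difference is presentational: you make the shrinking-region estimate and disjointness explicit via \lemref{miss-deg-two} and a measure bound, whereas the paper encodes the same facts in explicit coordinates and slope gaps.
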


\begin{proof}
    We give an example of such a simplicial complex.
    Let $n \geq 3$ be a multiple of three.  Let $\theset = \{2, 5, 8, \ldots,
    3n-1 \}$.  For each $i \in \theset$, define the following two vertices:
    \begin{equation}
        v_{i-1} := (i-1,i-1)
         \text{ and }
        v_{i+1} := (i+1,i+1+\frac{1}{i}).
    \end{equation}
    Let $\delta=\frac{1}{3n}$, and let $\ell_i$ be the line that goes through
    $v_{i-1}$ and $v_{i+1}$.  Let~$i,j \in \theset$ such that $j \neq i$.  Then,
    by construction,
    the difference of slope between~$\ell_i$ and $\ell_j$ is at least $\delta$.
    By an application of the intermediate value theorem, we choose an
    $\epsilon > 0$ so that the point
    $v_i =  (i,i+\frac{1}{2i} + \epsilon)$ is chosen such that the slopes of lines through
    $v_{i-1}$ and $v_{i}$ and through $v_i$ and $v_{i+1}$ differ by at
    most~$\frac{\delta}{2}$.  Then, we define a simplicial complex as follows:
    \begin{align*}
        \simComp_0 &:= \{ [v_i] \}_{i \in \{1,2,\ldots,3n\}}\\
        \simComp_1 &:=  \{ [v_{i-1},v_i], [v_i,v_{i+1}], [v_{i-1},v_{i+1}] \}_{i
        \in \theset}\\
        \simComp_2 &:=  \{ [v_{i-1},v_i,v_{i+1}] \}_{i \in \theset}
    \end{align*}
    No higher dimensional simplices are in $\simComp$.

    In order to represent $\simComp$, each vertex needs to be observed. In
    particular,
    for each $i \in \theset$,
    we must use a topological descriptor from a direction sampled
    from~$\obs{v_i}{\simComp}$.
    Let $i,j \in \theset$.
    Then, by choice of $\epsilon$,
    $\obs{\simComp}{v_i} \cap \obs{\simComp}{v_j} = \emptyset$.
    In other words, each such $v_i$
    has a disjoint observing region, so we need at least~$\Omega(n)$
    topological descriptors to represent~$\simComp$.
\end{proof}

We note here that, if the $i$-th vertex is not observed, then $K$ cannot be
distinguished from the simplicial complex that removes the $i$-th vertex and its
cofaces (i.e., replaces the $i$-th triangle with
a line segment).

\section{Experimental Investigation}\label{sec:experiments}
In this section, we perform a series of experiments.
First, we use an~$\epsilon$-net for sampling to highlight the issue
of oversampling in geometry-based approaches. Then, we
examine another commonly used technique in the field: uniformly sampling a fixed number
of directions. Our goal is to analyze how many of the observing regions are
missed as
the number of vertices increases. At the end of this section, we demonstrate how the
theoretical lower bounds on representation work in practice.
All code is available in a public git
repo.\footnote{Code available at \url{https://github.com/compTAG/topo-descriptor-experiments}}

\subsection{Data}\label{sec:data}
In our experiments,
three planar data sets: random point
clouds, the MPEG7 dataset \cite{mpeg7}, and a subset of the EMNIST dataset
(which is an extension of the MNIST dataset)
\cite{cohen2017emnist}.
For the MPEG7 and MNIST datasets, we describe a simple but
reasonable set of preprocessing steps for a TDA pipeline based on contours,
similar to that found in~\cite{bai09,
belongie02, felzenszwalb07}.

\paragraph{\randpts}
The point cloud dataset, named \randpts, consists of varying point set sizes
in which the points are
drawn i.i.d.\ from the uniform distribution over the $[0,10]^2$ box.
In particular, for each $k$ in ${3, 5, 10, 20, \ldots, 100}$,
we generate 100 point clouds of size $k$.

\paragraph{MPEG7 and EMNIST}
The MPEG7 dataset contains $70$ classes of binary images of shapes, such as animals.
The EMNIST by class dataset contains~$62$ classes of grayscale images of
handwritten characters; we use the first $100$ images in each class.

For these images, we extract the boundary from the object in the
image such that the vertices conform to
\assumptref{general} and the boundary is a simple polygon.
We simplify the boundaries at two levels using tools from OpenCV~\cite{opencv}
and NetworkX~\cite{hagberg2008exploring}, which
results in a total of four sets of graphs named
\mpegI, \emnistI, \mpegV, and \emnistV, where the
subscript corresponds to the simplification parameter.
Specifically, first, we convert to a binary image based on a threshold.
In MPEG7, images are black and white, so we use a threshold of zero.
In EMNIST, we compute a global threshold and apply it to all images.
To compute the global threshold,
we find the optimal threshold of the first image in each class using
Otsu's algorithm~\cite{otsu}, then
we set the global threshold as the average of these thresholds.
The global threshold was~$102.951$ with a standard
deviation of $7.631$.
Second, for each binary image, we compute one contour curve at two
simplification levels as follows.
We first apply \cite{suzuki85} to extract a set of contours
and keep the contour with the longest arc length.
Afterwards, we simplify each contour using
Douglas–Peucker~\cite{douglas1973algorithms}.
Given a parameter $\epsilon > 0$,
Douglas–Peucker iteratively simplifies a curve by replacing
a subcurve with a line segment, provided that the change in Hausdorff distance
is smaller than~$\epsilon$.
For a contour with arc length $s$,
we produce the two simplifications by setting
$\epsilon \in { .001 s, .005 s}$.
Third, because the contours have vertices that lie on a grid, and
\assumptref{general} states that $x$- and $y$-coordinates are distinct,
we perturb each coordinate by a value chosen uniformly from~$[-.01,.01)$.
Fourth, we clean the data by dropping graphs that contain three collinear
vertices or are not a simplicial complex.
For example, when simplifying contours in the second step,
a vertex removal may cause the contour to self-overlap.
More sophisticated simplification algorithms, such as \cite{estkkowski01},
avoid overlaps.
We also removed erroneous data,
such as one MPEG7 image with inverted colors.
After generating and cleaning the data, we convert each contour to a graph
and build our output sets.
After the preprocessing and pruning,
\mpegI\ has $1387$ graphs,
\emnistI\ has $5563$ graphs,
\mpegV\ has $1364$ graphs,
and \emnistV\ has $5751$ graphs.

\subsection{Too Many: Geometric Based Discretization}
\label{ss:smallest_stratum_exp}
By \thmref{sufficientcmpset}, one way to build a faithful
descriptor set is to find a finite sample \(P\) such that every stratum in the
coarse stratification is “hit” by a
point in \(P\). If we know the size of the smallest stratum in a coarse
stratification, then an alternate approach is to use an~\(\epsilon\)-net.

In this experiment, we investigate the relationship between the size of the smallest one-stratum (measured in radians)
and the number of vertices \(n_0\) as it increases across the datasets \randpts, \mpegI, and \emnistI. We show that
creating such an \(\epsilon\)-net results in too many descriptors to be analyzed. We depict the results as a scatter plot
on a \(\log\)-\(\log\) scale with best-fit lines in \figref{min-stratum-exp-001-approx}. Specifically, we examine how
the size of the smallest one-stratum (\(\log(m)\)) changes with the logarithm of
the number of vertices~(\(\log(n_0)\)).

\begin{figure*}[ht]
    \begin{subfigure}[b]{0.32\textwidth}
        \includegraphics[width=\textwidth]{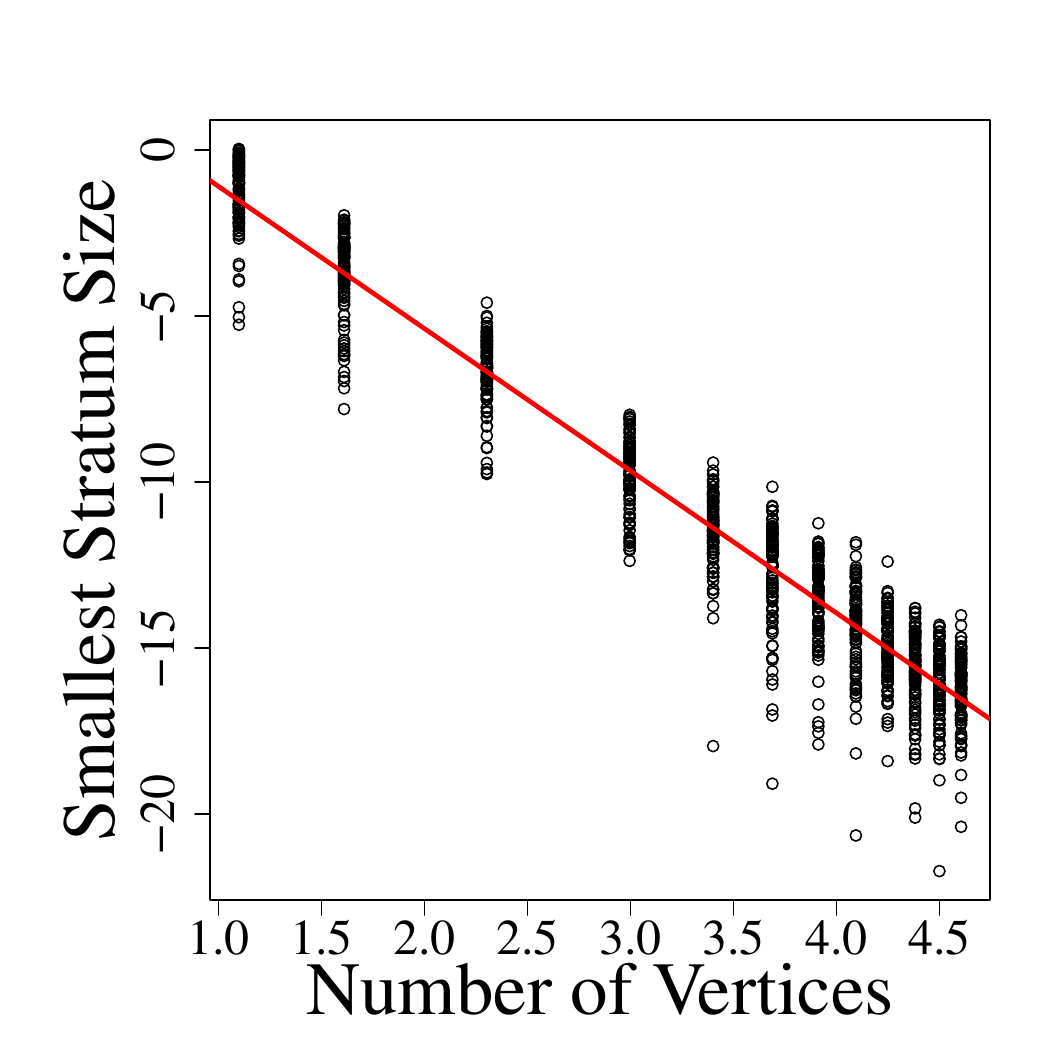}
        \caption{\randpts}
        \label{subfig:min-stratum-random-001-approx}
    \end{subfigure}
    \begin{subfigure}[b]{0.32\textwidth}
        \includegraphics[width=\textwidth]{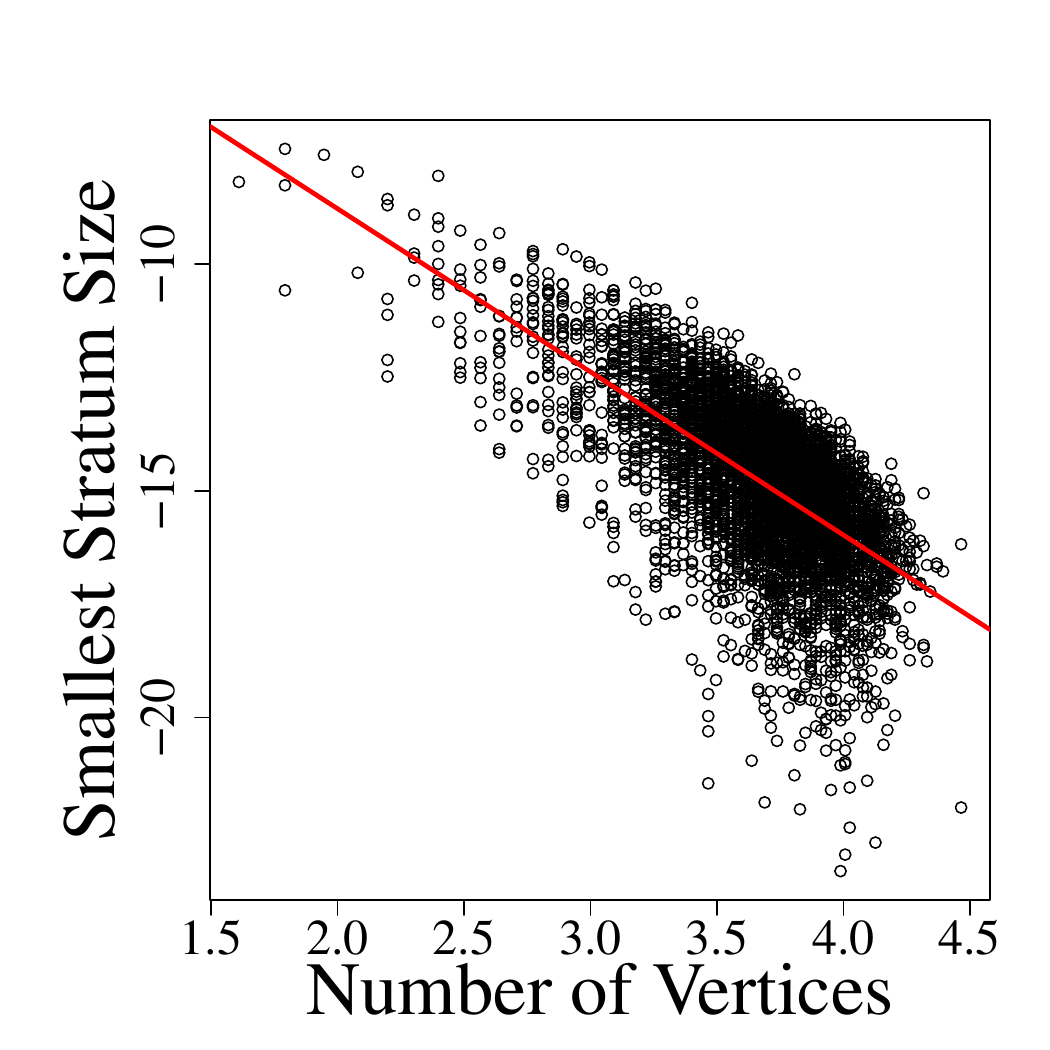}
        \caption{\emnistI}
        \label{subfig:min-stratum-mnist-001-approx}
    \end{subfigure}
    \begin{subfigure}[b]{0.32\textwidth}
        \includegraphics[width=\textwidth]{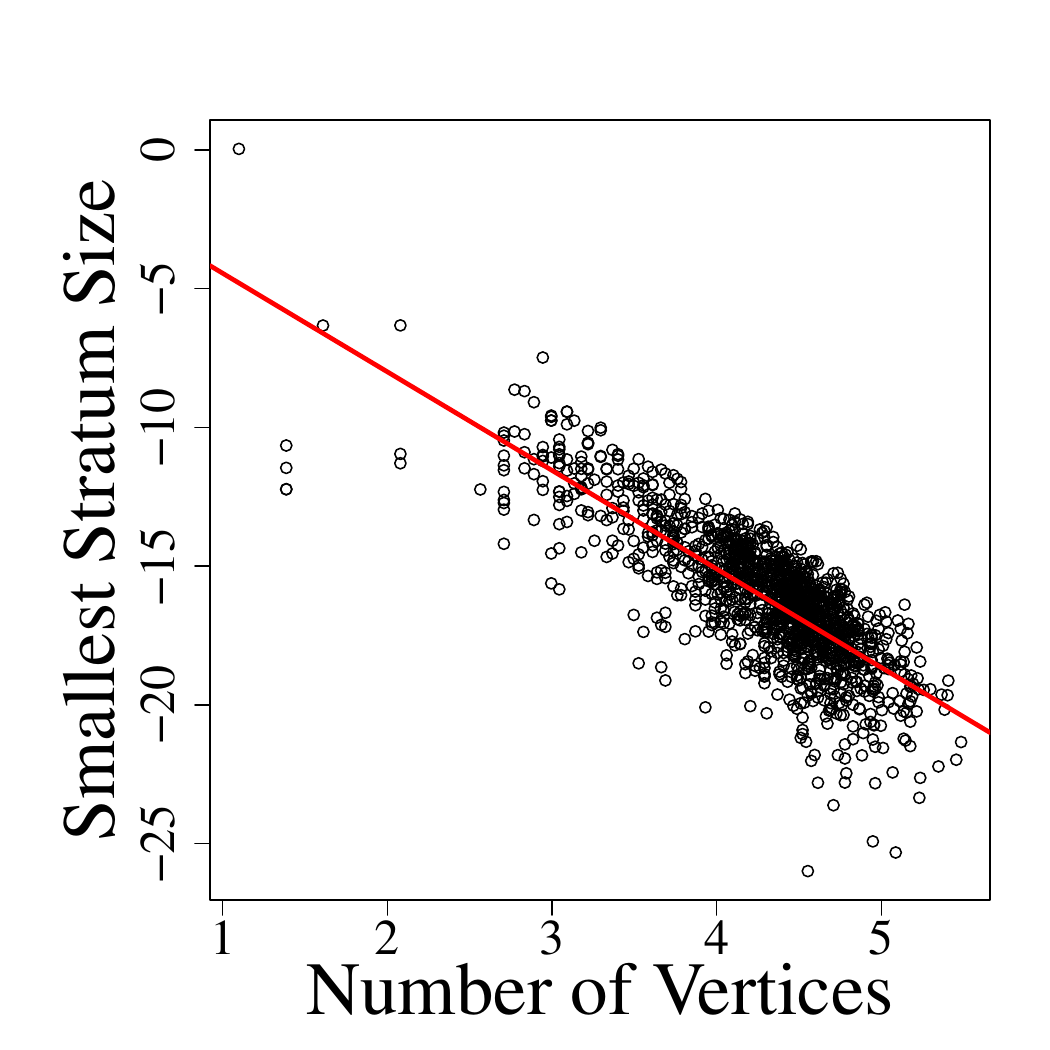}
        \caption{\mpegI}
        \label{subfig:min-stratum-mpeg7-001-approx}
    \end{subfigure}
    \caption{Log-log plot of the smallest stratum size versus the number of
        vertices for datasets
        \randpts, \emnistI, and \mpegI.
     }
     \label{fig:min-stratum-exp-001-approx}
\end{figure*}

Across all three datasets, the smallest stratum size decreases proportionally as the number
of vertices increases. Even for graphs with thousands of vertices, the minimum stratum size
drops below $10^{-5}$, making uniform discretization impractical due to the excessive number of
descriptors it would generate. The best-fit regression lines confirm a strong negative correlation
between the smallest one-stratum size and the number of vertices:

\begin{itemize}
    \item \randpts, $\log(m) = 3.17515 - 4.28122 \log(n_0)$.
    \item \emnistI, $\log(m) = -1.5887 - 3.5964 \log(n_0)$.
    \item \mpegI, $\log(m) = -0.89784 - 3.55102 \log(n_0)$.
\end{itemize}

In all three experiments,
as the number of vertices increases,
we see smaller
strata appearing with more drastic changes for smaller values of $n_0$.
The observation produces a particularly pessimistic view for trying to
discretize a transformation using the minimum observing region of the
dataset.
Indeed, we already see that for relatively small graphs with
$1000$’s of vertices, the minimum stratum size drops below $10^{-5}$,
which would produce descriptor sets containing hundreds of thousands
of descriptors, far too large to be of any use in practice.

\paragraph{Additional Considerations}

\begin{figure}
    \begin{center}
        \includegraphics[width=0.37\textwidth]{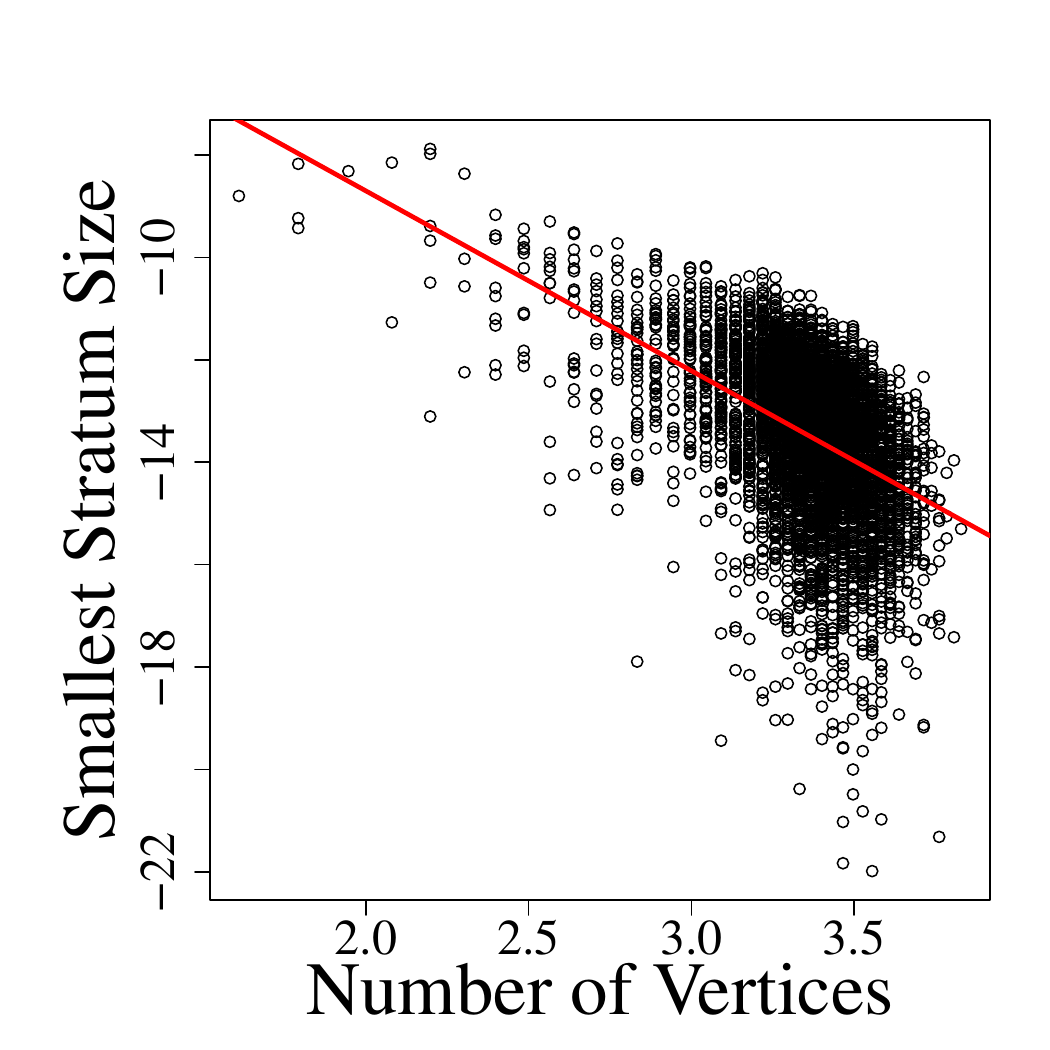}
        \label{subfig:min-stratum-mnist}
    \end{center}
    \caption{Log-log plot of the smallest stratum size versus the number of
    vertices for dataset \emnistV.}
    \label{fig:min-stratum-exp-005-approx}
\end{figure}

Because the contour approximations remove vertices,
it is natural to ask whether a coarser approximation
yields different dependencies between $m$ and $n_0$.
To test this, we consider the dataset \emnistV.
The resulting model still had a negative coefficient
for $\log(n_0)$, and the results are depicted in \figref{min-stratum-exp-005-approx}.
We note that, similar to the main experiments of this section,
a different coarsening does not substantially decrease the size of a
descriptor set created by uniformly discretizing based on the
minimal observing region.

\subsection{Too Few: Constant Size Discretization}
\label{ss:const_disc_exp}
We demonstrated that sampling directions
from every cell of the $\epsilon$-net yields a computationally
intensive volume of diagrams. Such an approach is impractical for
analyzing real-world data. Instead of sampling based on the
geometry of the data, it is common to use a fixed-size set of
directions. But how does it affect the results of the analysis?

To address this question, we conducted an experiment to examine how much
information about a simplicial complex can be captured using a fixed-size
direction set. Given a simplicial complex
$\simComp$, its coarse stratification~$S_\simComp$, and a
discretization $\Delta = \{ s_1, s_2, \ldots, s_n \} \subset \S^1$,
a stratum $k \in S_\simComp$ is considered \emph{hit} by the discretization
if $|\{s_i \in \Delta : s_i \in k \}| > 0$.

In this experiment, we fix a discretization of $\S^1$ and study
the proportion of strata from coarse stratifications that are hit
as we vary $n_0$. Specifically, we investigate how the number of captured
strata from the coarse stratification changes as the number of vertices within the simplicial complex varies.

For our experiments, we choose $2^{14} = 16384$ directions for the
discretization, which is much larger than is common in practice.
For example, \cite{turner2014persistent} uses a discretization of $64$
directions, and \cite{crawford2020predicting} use~$72$ directions.
We picked such a large set of directions to study the best-case scenario
for a constant discretization, as in practice, any reasonable discretization
would be coarser and miss a higher proportion of strata.

It is worth noting that for a graph with $n_0$ vertices, the number of strata
in the coarse stratification is $N = 2{n_0 \choose 2}$. To ensure that
there was ever a chance of hitting every stratum, we only include graphs
from our datasets whose coarse stratification contains fewer than $5000$
strata. The resulting sizes of the datasets were \randpts\ containing $900$ graphs,
\emnistI\ with $5543$ graphs, and \mpegI\ reduced to $521$ samples. By utilizing scatterplots, we visualize the relationship between
the number of hits and the number of vertices in the simplicial complex.

\begin{figure*}[h]
    \begin{subfigure}[b]{0.32\textwidth}
        \includegraphics[width=\textwidth]{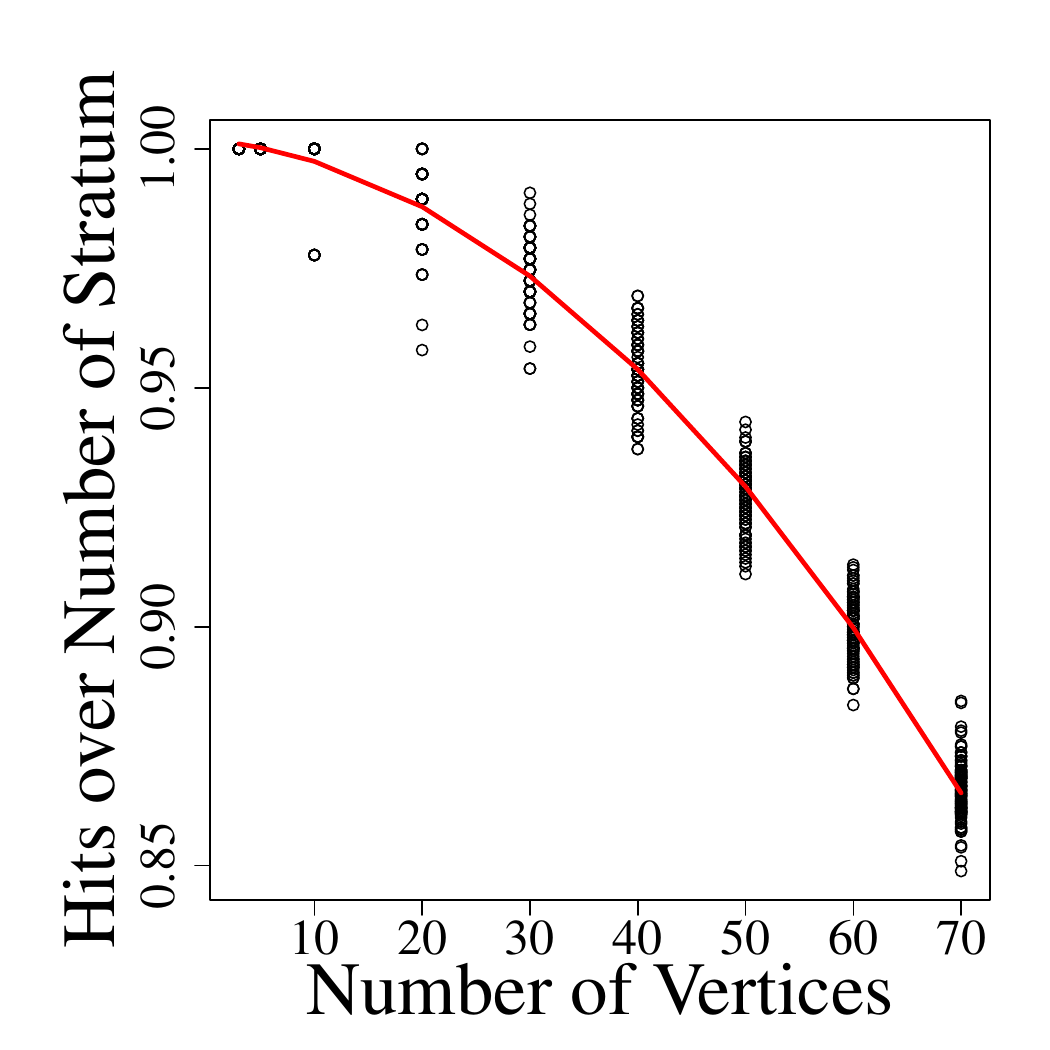}
        \caption{\randpts}
        \label{subfig:uniform-random}
    \end{subfigure}
    \begin{subfigure}[b]{0.32\textwidth}
        \includegraphics[width=\textwidth]{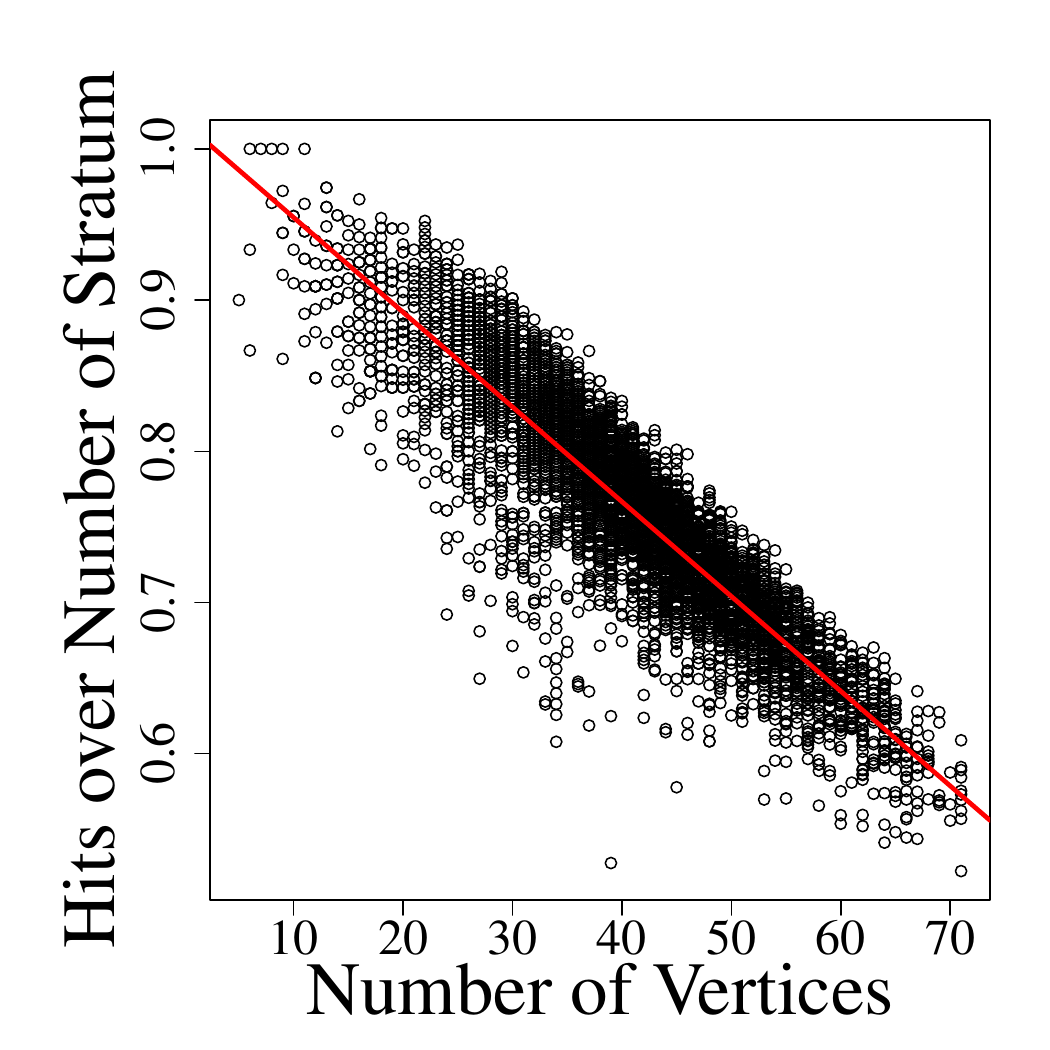}
        \caption{\emnistI}
        \label{subfig:uniform-mnist-001-approx}
    \end{subfigure}
    \begin{subfigure}[b]{0.32\textwidth}
        \includegraphics[width=\textwidth]{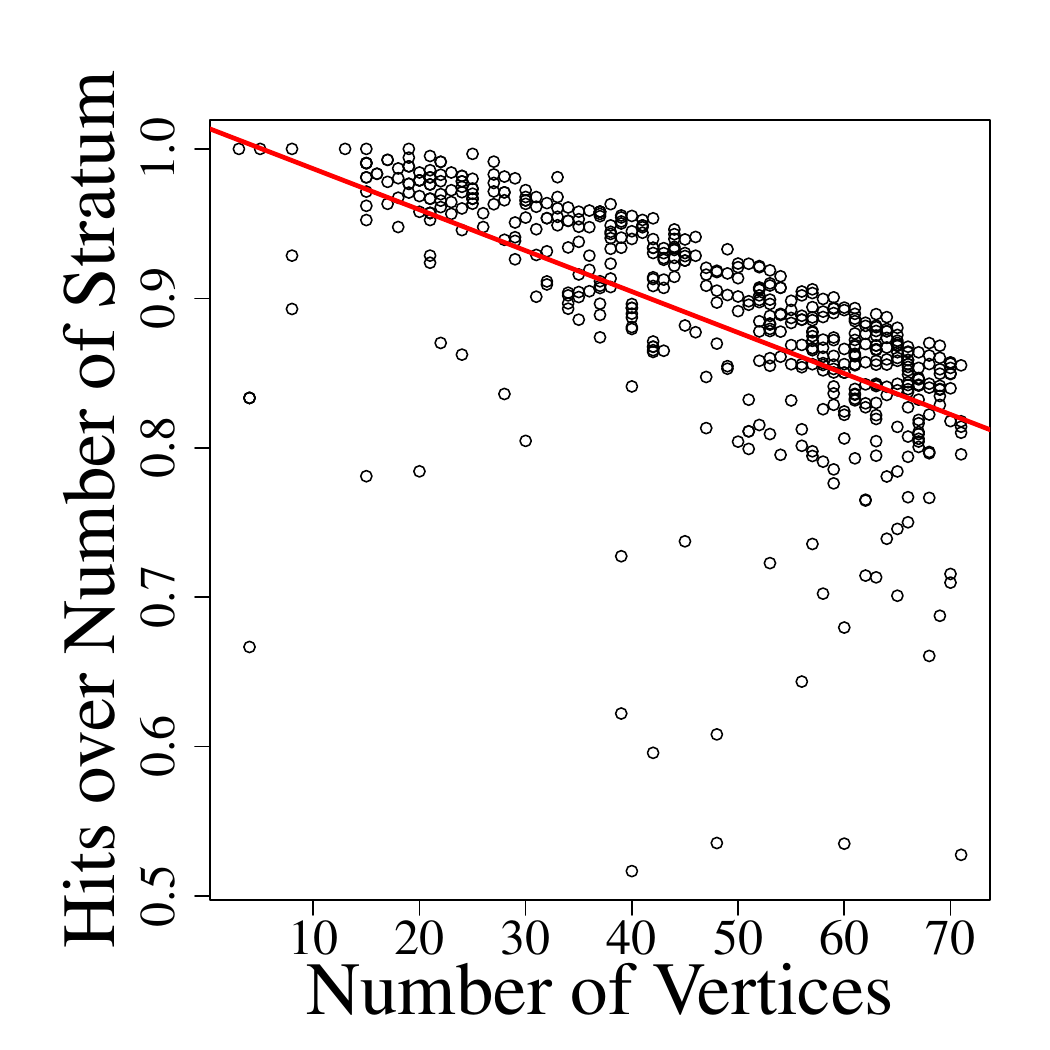}
        \caption{\mpegI}
        \label{subfig:uniform-mpeg7-001-approx}
    \end{subfigure}
    \caption{Plot of the ratio of hit stratum over the total number of
        strata versus the number of
        vertices for \randpts, \emnistI, and \mpegI.
        \label{fig:uniform_sampling-001-approx}
    }
\end{figure*}

We select 16384 uniformly distributed directions $\Delta$ from $\sph^1$.
For each graph, we compute the coarse stratification and
compute the proportion of strata hit by $\Delta$.
We observe that in all cases, even for small graphs with~$60$ or more
vertices, we miss more than 10\% of the strata, implying that if we
want to use a small set of descriptors,
we must be willing to miss some strata.

In summary, across all datasets (\randpts, \emnistI, and \mpegI), using uniform sampling
to hit each stratum becomes less effective as the number of vertices increases, with varying
degrees of correlation strength.

\begin{itemize}
    \item \randpts: $\log(m) = 0.956464 - 1.347025 \log(n_0) - 0.314594 \log^2(n_0)$
    \item \emnistI: $\log(m) = 1.017 - 0.006265 \log(n_0)$
    \item \mpegI: $\log(m) = 1.0141388 - 0.0027408 \log(n_0)$
\end{itemize}

\subsection{Small Graph Experiment}
\begin{figure}[h]
    \centering
    \includegraphics[width=.65\columnwidth]{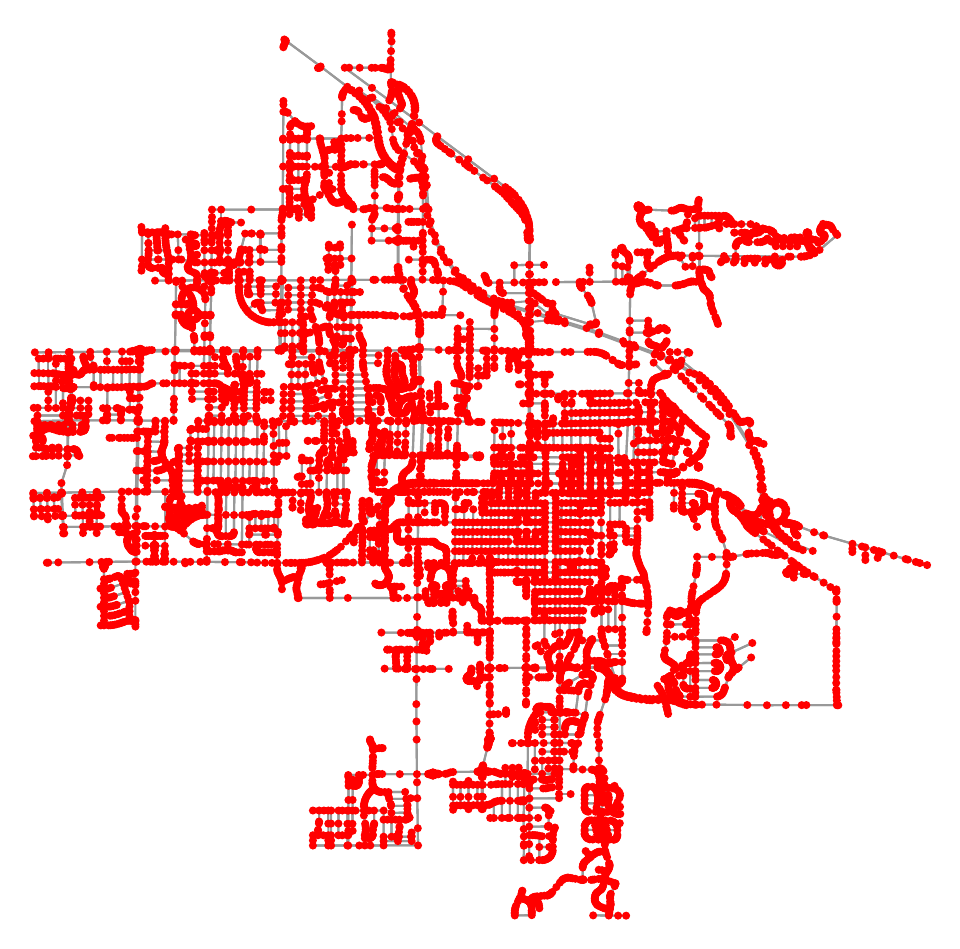}
    \caption{Bozeman, MT OSMnx Street Network}
    \label{fig:bozeman-street-network}
   \end{figure}

   Using too many directions ensures accuracy, but makes computation impractical due to the large number of descriptors.
   On the other hand, selecting directions without considering the $\epsilon$-net for simplicial complexes risks missing important topological features.
   This raises a key question: how can we effectively select directions for discretizing topological descriptors?

   Our goal in this experiment is to demonstrate that, in practice, only a small number of directions are needed to distinguish real-world data. We
   consider plane graphs as our dataset. A
   \emph{plane graph} is a graph embedded in the plane (i.e., a graph $G$ with a
   bijective continuous map $G\to \R^2$).

   \figref{bozeman-street-network} depicts the street network of Bozeman,
   MT, obtained from Python’s OSMnx package \cite{boeing2025osmnx}. We extract graphs from
   this network using bounding boxes of dimensions 60×60, 80×80, and 30×30 to obtain subgraphs with four, five, and six vertices,
   respectively. For each sampled graph, we generate all possible plane
   subgraphs. We restrict our study to graphs with at
   most six vertices because, as the number of vertices increases,
   enumerating all possible plane subgraphs becomes computationally
   intense.

   At the start, we consider a subgraph of the Bozeman road network and generate all possible plane subgraphs from it. We create a coarse stratification associated with the main subgraph. From this stratification, we randomly select a stratum and then a random direction within that stratum, denoted \( d_0 \). Using \( d_0 \), we compute the directional persistence diagram for every plane subgraph. We compare these diagrams to determine whether the original Bozeman subgraph can be uniquely identified from its plane subgraphs. If the identification is not unique—that is, if multiple plane subgraphs yield identical persistence diagrams—we select another random stratum (different from previously chosen ones) and a random direction within it. We then use persistence diagrams from both directions to attempt unique identification. This process continues by sampling additional random directions from different strata until the original subgraph can be uniquely identified from its set of plane subgraphs.

   \begin{figure*}[ht]
     \begin{subfigure}[b]{0.32\textwidth}
         \includegraphics[width=\textwidth]{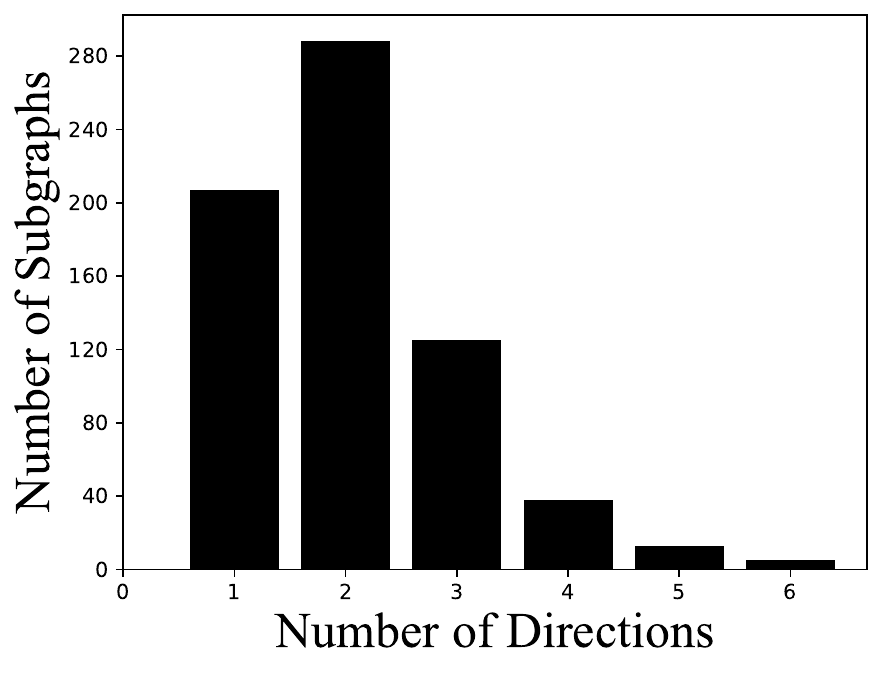}
         \caption{Subgraphs on 4 Vertices}
         \label{subfig:4-vertices}
     \end{subfigure}
     \begin{subfigure}[b]{0.32\textwidth}
         \includegraphics[width=\textwidth]{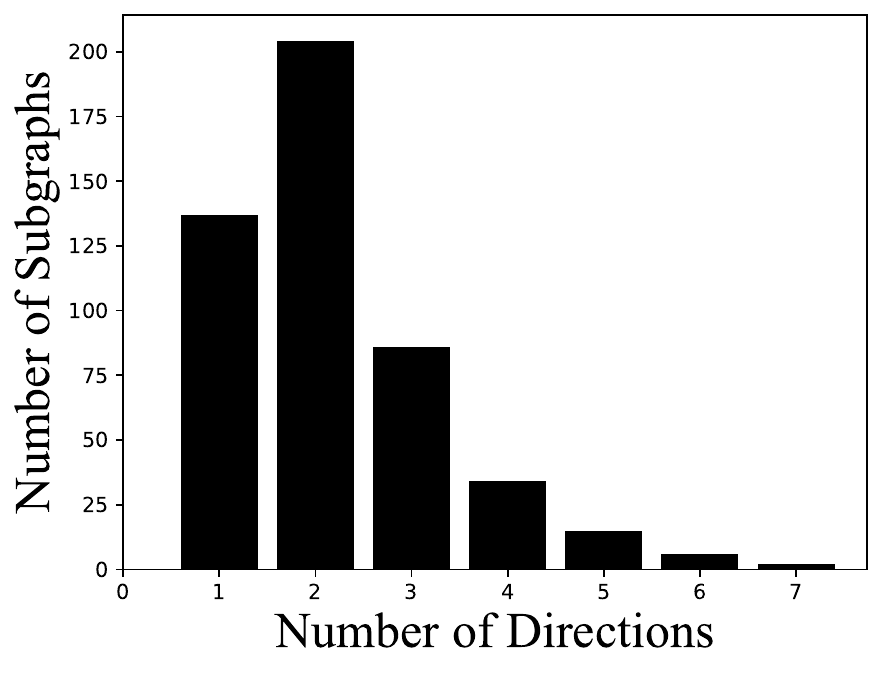}
         \caption{Subgraphs on 5 Vertices}
         \label{subfig:5-vertices}
     \end{subfigure}
     \begin{subfigure}[b]{0.32\textwidth}
         \includegraphics[width=\textwidth]{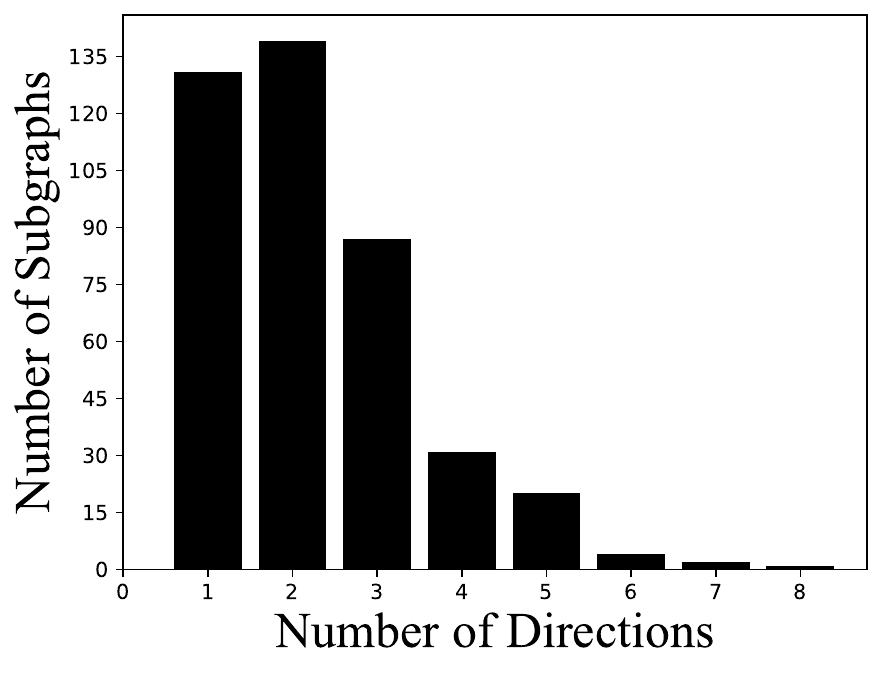}
         \caption{Subgraphs on 6 Vertices}
         \label{subfig:6-vertices}
     \end{subfigure}
     \caption{For each subgraph, we randomly sample from a discretization of $\S^1$ and add equally spaced directions until no other plane graphs exist with the same set of persistence diagrams.}
     \label{fig:small-graph-exp}
   \end{figure*}

   As depicted in \figref{small-graph-exp}, majority
   of the subgraphs sampled required only three directions.
   Generally, we find that between two and four directions
   are sufficient; however, we observed a small number of
   road networks that require a larger number of directions.
   These networks are depicted in \figref{small-graph-exp-directions}.
   As \assref{general} suggests, all the networks that
   require more directions have (near) colinearities,
   depicted as white vertices in the figure.

   \begin{figure*}[ht]
     \begin{subfigure}[b]{0.32\textwidth}
         \includegraphics[width=\textwidth]{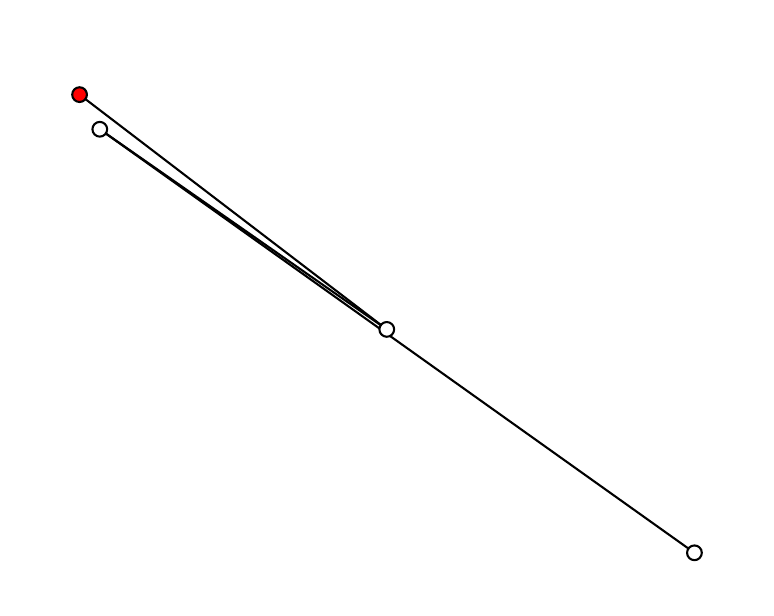}
         \caption{6 directions}
         \label{subfig:4-nodes-example}
     \end{subfigure}
     \begin{subfigure}[b]{0.32\textwidth}
         \includegraphics[width=\textwidth]{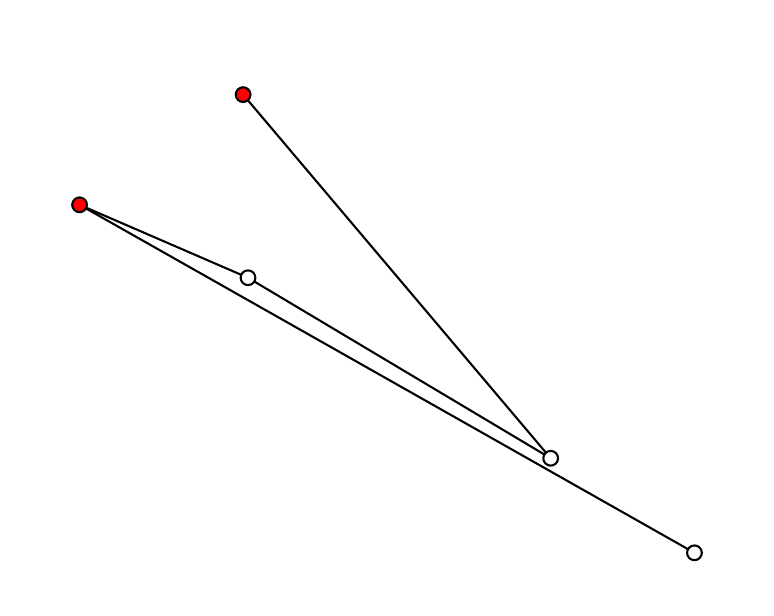}
         \caption{9 directions}
         \label{subfig:5-nodes-example}
     \end{subfigure}
     \begin{subfigure}[b]{0.32\textwidth}
         \includegraphics[width=\textwidth]{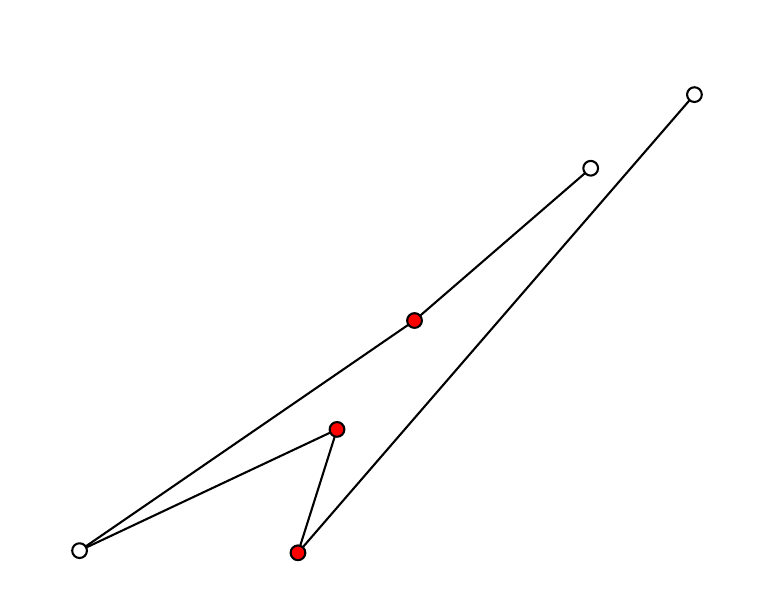}
         \caption{8 directions}
         \label{subfig:6-nodes-example}
     \end{subfigure}
     \caption{Graphs with 4, 5, and 6 nodes that require more than four directions.}
     \label{fig:small-graph-exp-directions}
   \end{figure*}

   The final step involves plotting a histogram depicting
   the number of subgraphs versus the number of directions
   required. The standard distribution of this histogram
   serves as the average number of directions from the coarse
   stratification needed to uphold the faithfulness
   of the descriptor set.

\section{Implications of Too Few}\label{sec:loss}
Our experiments show that using \emph{too many} directions can cause oversampling,
while using \emph{too few} directions often fails to capture the full
complexity of data.  Balancing the two, researchers often
err on the side of too few, for what is the worth of a descriptor set if there
are too many descriptors to realistically compute? Thus, we address one final
question: \emph{how far apart can two simplicial complexes be if their
topological descriptors are the same on a
fixed set of~directions}?

\begin{theorem}[Lost Vertex]\label{thm:loss}
    Let~$\simComp$ be a simplicial complex such that the following conditions
    are met:
    \begin{enumerate}[(a)]
        \item The vertices are in general position (\assumptref{general}),
            \label{cond:lossG-gen}
        \item There exist distinct vertices $u,v,w$ such that
            $||u-v|| = ||w-v||$,
            \label{cond:lossG-eq}
        \item The open ball $\ball{v}{||w-v||} \cap \simComp =
            \{v, (u,v), (v,w)\}$,\label{cond:lossG-ballV}.
    \end{enumerate}

    Let $\Delta \subset \S^d\setminus \obs{v}{\simComp}$.
    Let~$\gscSpace_{\Delta}$ denote the set of geometric simplicial complexes
    in~$\R^d$  such that, for all $\dir \in \Delta$, the corresponding
    filtrations are equivalent:~$\filt{K}{\dir} \cong
    \filt{\sampsim}{\dir}$.
    Let~$\theta= \pi - \angle{uvw}$.
    Then, there exists a simplicial complex~$\sampgraph \in \gscSpace_{\Delta}$ such that:
    \begin{equation}\label{eq:haus-bound}
        d_H(\simComp,\sampsim) = ||v-w||\cos{\frac{\theta}{2}} > 0.
    \end{equation}
\end{theorem}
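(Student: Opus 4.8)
The plan is to construct $\sampgraph$ explicitly by ``moving the lost vertex'' $v$ to a new location $v'$ while leaving every other vertex of $\simComp$ fixed, and then to verify (i) that $\sampgraph \in \gscSpace_\Delta$, i.e.\ the two complexes induce equivalent filtrations in every direction of $\Delta$, and (ii) that the Hausdorff distance is exactly $\|v-w\|\cos\frac{\theta}{2}$. The natural candidate for $v'$ is the reflection of $v$ across the line (affine span) through $u$ and $w$, or more precisely the point on that line that is the foot of the perpendicular from $v$; condition~\eqref{cond:lossG-eq} ($\|u-v\| = \|w-v\|$) makes $v$ equidistant from $u$ and $w$, so the foot of the perpendicular from $v$ onto $\mathrm{aff}(u,w)$ is the midpoint $m$ of $[u,w]$, and one checks by elementary trigonometry in the isoceles triangle $uvw$ that $\|v-m\| = \|v-w\|\cos\frac{\theta}{2}$ where $\theta = \pi - \angle uvw$ is the exterior angle. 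So I would set $v' := m$ (collapsing the triangle fan at $v$ onto the segment $[u,w]$), giving $d(v,v') = \|v-w\|\cos\frac{\theta}{2}$, and define $\sampgraph$ to be $\simComp$ with $v$ replaced by $v'$ (and the simplices $(u,v),(v,w)$ now degenerate onto $[u,w]$; if one prefers to keep $\sampgraph$ a genuine simplicial complex one perturbs $v'$ infinitesimally off the line, which by an intermediate-value / openness argument as in \thmref{lower:bc} does not affect the filtration-equivalence claims).

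Next I would establish $d_H(\simComp,\sampgraph) = \|v-v'\|$. Since every point of $\sampgraph$ other than those in the moved simplices coincides with a point of $\simComp$, and the moved portion lies inside $\ball{v}{\|w-v\|}$ by condition~\eqref{cond:lossG-ballV}, the Hausdorff distance is governed entirely by how far the point $v$ and the two edges incident to it travel. The supremum of the displacement is attained at $v$ itself: any point on $[u,v]$ or $[v,w]$ moves by less than $v$ does (a convexity/linearity estimate along the edges), and conversely the point $v\in\simComp$ has distance $\|v-v'\|$ to $\sampgraph$ since $\sampgraph$ near there is contained in the cone/halfspace on the far side. This pins $d_H = \|v-v'\| = \|v-w\|\cos\frac{\theta}{2}$, which is strictly positive because general position~\eqref{cond:lossG-gen} forbids $u,v,w$ collinear, so $\theta \neq \pi$ and $\cos\frac{\theta}{2} > 0$.

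The technical heart — and the step I expect to be the main obstacle — is showing $\sampgraph\in\gscSpace_\Delta$: for every $\dir\in\Delta$ we need $\filt{\simComp}{\dir}\cong\filt{\sampgraph}{\dir}$, meaning $\dir$ sees the vertices of the two complexes in the same order and witnesses the same topological changes. Because $\Delta\subseteq\S^d\setminus\obs{v}{\simComp}$, \lemref{miss-deg-two} (applied after noting that, within $\ball{v}{\|w-v\|}$, $v$ behaves like a degree-two vertex with neighbors $u,w$, which is exactly what condition~\eqref{cond:lossG-ballV} buys us) tells us that for each such $\dir$ the vertex $v$ is neither a local max nor a local min along $\dir$ — one of $u,w$ is below $v$ and the other is above. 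Moving $v$ to $v'\in\mathrm{aff}(u,w)$ keeps $\dir\cdot v'$ strictly between $\dir\cdot u$ and $\dir\cdot w$ (it is a convex combination of them), so the relative order of $v$ among all vertices is unchanged and the lower-star filtration is combinatorially identical; hence the filtrations are equivalent and all topological descriptors of fundamental type agree on $\Delta$. The subtlety to handle carefully is the boundary case where $\dir$ is orthogonal to the segment or lies on a stratum wall so that $v$ ties with $u$ or $w$ in height: here one must argue either that such $\dir$ has measure zero / can be excluded, or that the tie is resolved identically for $v$ and $v'$ because both lie in the same affine relation to $u$ and $w$; condition~\eqref{cond:lossG-gen} keeps $\simComp$ itself generic, and choosing the infinitesimal perturbation of $v'$ consistently with $v$ handles $\sampgraph$.
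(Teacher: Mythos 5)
Your construction is, at the level of point sets, the same as the paper's: the paper takes $\sampgraph = \{(u,w)\}\cup \simComp\setminus\{v,(u,v),(v,w)\}$, i.e., it deletes $v$ and its two edges and inserts the single edge $(u,w)$, whereas your complex, which keeps a vertex at the midpoint $m$ of $[u,w]$, occupies exactly the same subset of $\R^d$. Your Hausdorff computation (the distance from $v$ to $[u,w]$ is $\|v-w\|\cos\frac{\theta}{2}$, condition (c) rules out anything in $\sampgraph$ being closer to $v$, and general position gives strict positivity) matches the paper's. One aside should be dropped: no perturbation is needed, since $(u,m)$ and $(m,w)$ intersect in the common face $\{m\}$ and so already form a genuine geometric simplicial complex; moreover, perturbing $m$ off the line would destroy the \emph{exact} equality claimed in the theorem, and for an infinite $\Delta$ approaching the boundary of $\obs{v}{\simComp}$ no single perturbation size would preserve membership in $\gscSpace_\Delta$. (Also, condition (c) gives only two edges at $v$ --- there is no triangle fan.)

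The genuine gap is in your equivalence argument. You claim that because $\dir\cdot m$ is a convex combination of $\dir\cdot u$ and $\dir\cdot w$, ``the relative order of $v$ among all vertices is unchanged.'' That does not follow: for $\dir\in\Delta$ both $\dir\cdot v$ and $\dir\cdot m$ lie strictly between $\dir\cdot u$ and $\dir\cdot w$, but some other vertex $x$ of $\simComp$ (outside the ball of condition (c)) can have $\dir\cdot x$ between $\dir\cdot v$ and $\dir\cdot m$, in which case the vertex orders of $\simComp$ and of your $\sampgraph$ differ under the correspondence $v\leftrightarrow m$, so the lower-star filtrations are not combinatorially identical as you assert. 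The paper sidesteps this entirely by having no vertex between $u$ and $w$ at all: for $\dir\notin\obs{v}{\simComp}$ the three vertices are seen monotonically ($u$ before $v$ before $w$, or the reverse), so deleting $v$ with its edges and adding $(u,w)$ changes nothing the filtration detects, and all remaining vertices keep their heights. If you insist on keeping the midpoint vertex, you must replace the order-preservation claim with an argument at the level of topological changes: neither $v$ in $\simComp$ nor $m$ in $\sampgraph$ is ever observable from any $\dir\in\Delta$, every other simplex enters at the same height in both complexes, and the components of $u$ and $w$ are joined at height $\max(\dir\cdot u,\dir\cdot w)$ in both. As written, the key membership claim $\sampgraph\in\gscSpace_\Delta$ rests on a false intermediate step, even though the construction itself can be salvaged (most easily by simply adopting the paper's version without the extra vertex).
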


\begin{proof}
    Let $\sampgraph = \left\{(u,w)\right\} \cup
        \simComp  \setminus \{ v, (u,v), (v,w)\}$.
    For any $\dir \notin \obs{v}{\simComp}$, we know that we always see the three named
    vertices in one of two orders: $u$ before~$v$ before $w$, or $w$ before~$v$
    before $u$.  This property, and by the definition of $\sampgraph$,
    we know that~$\filt{\simComp}{\dir} \cong \filt{\sampgraph}{\dir}$ for all
    $\dir \in \Delta$.
    Thus,~$\sampgraph \in \gscSpace_{\Delta}$.

    Let $\pi(v)$ denote the orthogonal
    projection of~$v$ onto the edge~$(u,w)$.
    By the definition of
    $\simComp$ and by
    \condref{lossG}{ballV}, we know that the Hausdorff distance between
    $\simComp$ and $\sampgraph$ is as follows:
    \begin{equation}
        d_H(\simComp, \sampgraph) = d_H(v,\sampgraph) = d_H(v,(u,w)).
    \end{equation}
    In other words,
    the Hausdorff distance between $\simComp$ and $\sampgraph$ is the Euclidean
    distance between $v$ and $\pi(v)$.

    By trigonometric properties of
    isosceles triangles, we know that~$||v-\pi(v)||=
    ||v-w|| \cos{\frac{\theta}{2}}$.

    Thus, we~conclude:
    \begin{equation}
        d_H(\simComp,\sampgraph) = ||v-w||\cos{\frac{\theta}{2}}.
    \end{equation}

    Finally, because $\theta= \pi - \angle{uvw}$ and by \condref{lossG}{gen},
    we know that $\theta \in (0,\pi)$.  Furthermore, because $v \neq w$ by
    \condref{lossG}{eq}, we obtain $||v-w||\cos{\frac{\theta}{2}} > 0$.
\end{proof}

\section{Discussion}\label{sec:discussion}
The experiments in this paper are all in $\R^2$.  They could, of course, be
extended to $\R^3$ or even $\R^d$, as all of the theory and definitions allow for arbitrary
dimensions. However, in this paper, we sought to find simple, yet common, data
examples that demonstrate how bad the behavior of under- or over-sampling
directions could be. Extending these experiments to $\R^3$ or higher
would add computational complexity, as data structures defined
over~$\S^1$ are relatively simple, but over $\S^2$ require more nuances. The
observed behavior would get worse in higher dimensions
(e.g.
Figure \ref{subfig:6-nodes-example}).

While, in theory, faithful representations of shape need many directions, in
practice, smaller sets of directions seem to suffice.  Of course, loss of information is
expected in data summaries, and the impact of that loss depends on the
particular application and the machine learning or statistical tools used.
For example, in cluster analysis, classification in (small) descriptor space is
much easier than classification in (full) shape space, and approximate
clustering is often acceptable. We hope that this paper provides some insight
into how much information about data is lost
with the smaller sets used, so that, when used in applications,
scientists better understand the limits of these descriptors.

\paragraph{Acknowledgments}
The authors thank Luke Padula and James Wilson for their assistance with
implementing prototypes of the experiments and preprocessing data.
We also thank Lorin Crawford for conversations around the use of topological
transforms in practice.

This work was partially supported by the National Science Foundation grants
DMS~1854336 and CCF~2046730.

\bibliographystyle{plain}
\bibliography{references}

\end{document}